\documentclass[12pt]{article}
\usepackage{tikz}
\usetikzlibrary{shapes.geometric, arrows, positioning, calc}
\usepackage{hyperref}
\usepackage[utf8]{inputenc}
\usepackage{siunitx}
\usepackage{graphicx}
\usepackage{subcaption} 
\usepackage{booktabs}
\usepackage{multirow}
\usepackage{empheq}
\usepackage{amsmath}
\usepackage{enumerate}
\usepackage{amssymb}
\usepackage{geometry}
\usepackage{setspace}
\usepackage{graphicx}
\usepackage[authoryear,round]{natbib}
\usepackage{lmodern}
\usepackage{soul}
\usepackage{comment}
\usepackage{ifthen}
\usepackage{xr}
\usepackage{framed}
\setcounter{MaxMatrixCols}{10}
\setlength{\oddsidemargin}{-.150in}
\setlength{\textwidth}{6.8in}
\setlength{\topmargin}{0in}
\setlength{\headheight}{0in}
\setlength{\headsep}{0in}
\setlength{\textheight}{9.1in}
\usepackage{algorithm}
\usepackage{algpseudocode}
\setlength{\oddsidemargin}{-.150in}
\setlength{\textwidth}{6.8in}
\setlength{\topmargin}{0in}
\setlength{\headheight}{0in}
\setlength{\headsep}{0in}
\setlength{\textheight}{9.1in}

\newtheorem{theorem}{Theorem}
\newtheorem{corollary}{Corollary}

\newtheorem{lemma}{Lemma}
\newtheorem{remark}{Remark}
\newtheorem{proposition}{Proposition}
\newtheorem{assumption}{Assumption}
\newenvironment{proof}[1][Proof]{\textbf{#1.} }{\ \rule{0.5em}{0.5em}}
\DeclareMathSizes{12}{11}{7.7}{5.5}
\usepackage[inline]{enumitem}
\usepackage{comment}

\doublespacing

\graphicspath{ {WD-Graphs/} }

\begin{document}

\title{Deep Penalty Methods: A Class of Deep Learning Algorithms for Solving High
Dimensional Optimal Stopping Problems}


\thispagestyle{empty}
\author{Yunfei Peng \and Pengyu Wei \and Wei Wei%
\thanks{
Yunfei Peng:  School of Mathematics and Statistics, Guizhou University, Guiyang 550025, China. Email: yfpeng@gzu.edu.cn.
Pengyu Wei: Division of Banking \& Finance, Nanyang Business School, Nanyang Technological University,
Singapore. Email: pengyu.wei@ntu.edu.sg.
 Wei Wei: Department of Actuarial Mathematics and Statistics, School of Mathematics and Computer Science, Heriot-Watt University and  Maxwell Institute for Mathematical Sciences, Edinburgh, Scotland, EH14, 4AS, UK. E-mail: wei.wei@hw.ac.uk. 
 We thank seminar audiences at Nanyang Technological University, Heriot-Watt University, Warwick University and conference participants of Symposium on the Development of Partial Differential Equations in Shanghai and  the Workshop on Stochastic Control, Financial Technology, and Machine Learning in Hong Kong for helpful comments. 
}
}
%
%
\thispagestyle{empty}
\date{\today}
\maketitle
\thispagestyle{empty}

\vspace*{-0.4cm}
\begin{abstract}
We propose a deep learning algorithm for high dimensional optimal stopping problems. Our method is
inspired by the penalty method for solving variational inequalities. Within our approach, the penalized PDE is approximated using the Deep BSDE framework proposed by \cite{weinan2017deep}, which leads us to coin the term "Deep Penalty Method (DPM)" to refer to our algorithm. We show that the error of the DPM can be bounded by the cost function and $O(\frac{1}{\lambda})+O(\lambda h) +O(\sqrt{h})$, where $h$ is the step size in time and $\lambda$ is the penalty parameter. This finding emphasizes the need for careful consideration when selecting the penalization parameter and suggests that the discretization error converges at a rate of order $\frac{1}{2}$. We validate the efficacy of the DPM through numerical tests conducted on a high-dimensional optimal stopping model in the area of American option pricing. The numerical tests confirm both the accuracy and the computational efficiency of our proposed algorithm. 

\end{abstract}

\noindent {\it Key words}\/: deep learning, neural networks, high dimensional optimal stopping, variational inequalities, Deep BSDEs.

\newpage
\setcounter{page}{1}
\section{Introduction}\label{SecIntroduction}
High dimensional optimal stopping models (e.g., American option pricing) have posed a long-standing computational challenge. In this paper, inspired by the penalty method for solving variational inequalities, we propose a deep learning algorithm for high dimensional optimal stopping problems in a continuous time setting.  Given its natural integration of deep learning and the penalty method, we call our algorithm ``Deep Penalty Method (DPM)''. \par 
Recently,  the deep backward stochastic differential equation (BSDE) method, advocated by \cite{weinan2017deep}, \cite{han2018solving} and \cite{beck2019machine}, has demonstrated a remarkable performance in addressing high dimensional dynamic models in continuous time. The Deep BSDE solver has been applied to tackle high dimensional optimal stopping problems, particularly American option pricing (e.g., \citeauthor{hure2020deep} \citeyear{hure2020deep}, \citeauthor{chen2021deep} \citeyear{chen2021deep} and \citeauthor{na2023efficient} \citeyear{na2023efficient}). In most of literature in this direction, an optimal stopping problem in continuous time is approximated by its discrete time companion with finitely many stopping opportunities. At each decision point, the Deep BSDE solver is utilized to obtain the continuation value, which is then compared with the stopping payoff. The maximum of the continuation value and the stopping payoff is then taken as the value function at the decision time and the terminal payoff in the next stage. This recursive process continues until time 0, yielding the value function of the optimal stopping problem.\par
It is easy to see that the aforementioned numerical methods for optimal stopping problems rely on discrete-time approximation, and the corresponding Deep BSDE solvers are implemented for each time step. On one hand, to control the "discretization error", we need to increase the number of stopping opportunities. On the other hand, increasing the number of stopping opportunities will inevitably lead to the accumulation of error from the Deep BSDE solver at decision times\footnote{See the error analysis in \cite{hure2020deep}.}. As the optimization associated with the Deep BSDE solver is usually computationally costly, a delicate balance must be made between the "discretization error" and the "optimization error" inherent in the Deep BSDE solver.\par 
One of the advantages of the DPM is its ability to mitigate the accumulation of the ``optimization error''. The DPM integrates the Deep BSDE solver and penalty method, which essentially approximates the optimal stopping problems in continuous time by randomizing the stopping times using a sequence of Poisson arrival times\footnote{See, e.g., \cite{liang2015funding}, \cite{liang2015stochastic} and \cite{liang2013optimal} for rigorous mathematical discussions in various settings.}. As the penalty method yields a semi-linear PDE used to approximate the solution to the variational inequality arising from the optimal stopping problem, the optimization in Deep BSDE solver solely focuses on the terminal value of the penalized PDE and is executed only once. This effectively eliminates the accumulation of the errors due to the optimization in the Deep BSDE solver.\par 
In the error analysis, we show that the error of the DPM can be bounded by the cost function and $O(\frac{1}{\lambda})+O(\lambda h) +O(\sqrt{h})$, where $h$ represents the step size in time and $\lambda$ denotes the penalty parameter. In contrast to some commonly used numerical methods where $\lambda$ and $h$ can be chosen independently\footnote{See, e.g., \cite{forsyth2002quadratic} for more details.}, our findings suggest that the relationship between these parameters is critical when discretizing the penalized BSDE. Specifically, setting $\lambda = \frac{1}{\sqrt{h}}$ yields a convergence rate of $O(\sqrt{h})$ for the discretization. This result aligns with findings in the literature on numerical BSDEs using discrete-time approximations without penalty terms\footnote{See, e.g., \cite{bouchard2004discrete} and \cite{bouchard2008discrete} for the cases of BSDEs and reflected BSDEs respectively.}, and thus indicating that the penalty method utilized in the DPM would not exacerbate discretization errors.\par 


Finally, we test our algorithm on a high-dimensional American index put option. This pricing problem for this type of option can be simplified to a standard one-dimensional option pricing case, and thus providing a good benchmark for our analysis. Through numerical experiments, we illustrate the effectiveness and accuracy of the DPM in pricing the American index option in high dimensions.\par 
\textit{Literature.} Numerous approaches have been developed to tackle optimal stopping models and the associated variational inequalities, such as binomial trees (e.g., \citeauthor{hul2003} \citeyear{hul2003}, \citeauthor{jiang2004convergence} \citeyear{jiang2004convergence} and \citeauthor{liang2007rate} \citeyear{liang2007rate}), penalty methods (e.g., \citeauthor{forsyth2002quadratic} \citeyear{forsyth2002quadratic} and \citeauthor{witte2011penalty} \citeyear{witte2011penalty}), policy iterations (e.g., \citeauthor{reisinger2012use} \citeyear{reisinger2012use}), least-squares Monte Carlo (LSM) methods (e.g., \citeauthor{longstaff2001valuing} \citeyear{longstaff2001valuing}) and stochastic mesh methods (e.g., \citeauthor{broadie2004stochastic} \citeyear{broadie2004stochastic}). While these numerical methods are widely employed in solving low ($\text{dimension}\le 3$) and /or moderate ($\text{dimension} \le 20$) dimensional optimal stopping problems, they become increasingly impractical for high-dimensional scenarios due to rapid growth in computational complexity.\par 
Recently, deep learning, which applies multi-layer neural networks to numerically solve optimization problems, has demonstrated impressive capabilities in solving high dimensional dynamic models.
Research in the literature has explored neural network-based algorithms for high-dimensional optimal stopping models. The usage of neural networks to numerically handle non-linear dynamic models traces back to \cite{lagaris1998artificial}, who approximate the solutions to PDEs using neural networks on a fixed mesh. \cite{haugh2004pricing} and \cite{kohler2010pricing} introduce neural networks with a single dense layer into American option pricing. The deep learning method for high dimensional PDEs is pioneered by \cite{weinan2017deep}, who establish the Deep BSDE solver. The error analysis of the Deep BSDE method for solving semi-linear PDEs is developed by \cite{han2020convergence}. Subsequently, the Deep BSDE solver has been refined in various ways (see, e.g., \citeauthor{raissi2018forward} \citeyear{raissi2018forward},  \citeauthor{chan2019machine} \citeyear{chan2019machine}, \citeauthor{hure2019some} \citeyear{hure2019some} and \citeauthor{sun2022credit} \citeyear{sun2022credit}).  Extensions of the Deep BSDE method to the realm of American option pricing has been undertaken by \cite{hure2020deep}, \cite{chen2021deep} and \cite{na2023efficient}.\par 
In addition to the Deep BSDE method, there are other avenues of deep learning based methods for high dimensional optimal stopping models. \cite{sirignano2018dgm} develop a systematic approach based on deep learning and the Galerkin method. Different from the aforementioned literature which focuses on the value functions of the stopping problem, \cite{becker2019deep} employ deep learning approach to approximate optimal stopping rules.\par 
The remainder of the paper is organized as follows. Section \ref{GeneralModel} presents a general optimal stopping problem in a continuous time setting. The penalty method is demonstrated in Section \ref{sec:penalty}, and followed by an analysis of error resulting from the penalty approximation. Section \ref{sec:DPM} introduces the DPM and a detailed error analysis is provided. In Section \ref{sec:numerical}, we implement and test the algorithm on a high dimensional optimal stopping model arising in American option pricing. Section \ref{sec:conclusion} concludes.

\section{General Setup }\label{GeneralModel}
\subsection{The dynamics}
We start with a filtered probability space $(\Omega,\mathcal{F},\{\mathcal{F}_t\}_{0\le t\le T},\mathbb{P})$ which satisfies the usual conditions. Let $W= (W_1,W_2,\ldots,W_n)^T$ define an $n$-dimensional Brownian motion adapted to the filtration $\{\mathcal{F}_t\}_{t\ge0}.$  Filtration $\{\mathcal{F}_t\}_{t\ge0}$ captures all accessible information generated by a family of Markov process $X=X^x$ parameterized by the initial state $X_0=x\in\mathbb{R}^d$ and governed by the following
stochastic differential equation (SDE)\footnote{Any vector in this paper is a column vector without any further specification.},
\begin{equation}\label{SDE}
dX_t = b(t,X_t) dt + \sigma(t,X_t) dW_t
\end{equation}
where  $b(t,x) = (b_1(t,x),b_2(t,x),\ldots,b_d(t,x))^T$, $\sigma(t,x) = (\sigma_{ij}(t,x))_{i=1,2,\ldots,d, j=1,2,\ldots,n}$ are functions defined on $[0,T]\times\mathbb{R}^d$ and valued in $\mathbb{R}^d$ and $\mathbb{R}^{d\times n}$ respectively. We introduce the following assumption for $b$ and $\sigma$. \par
\begin{assumption}\label{Assp:1}
 $b$ and $\sigma$ are uniformly H\"{o}lder continuous in $t$ with exponent $\frac{1}{2}$ and uniformly Lipschitz continuous in $x$, i.e., there exists an $L>0$ such that,
\begin{align}\label{Lip}
\vert \phi(t,x) - \phi(s,y)\vert\le L(\vert t-s \vert^{\frac{1}{2}}+\vert x - y\vert),
\end{align}
for all $t,s\in[0,T], x,y\in\mathbb{R}^d$, where $\phi\in\{b, \sigma\}, \vert A\vert: = \sqrt{\sum_{i=1}^d\sum_{j=1}^n a_{ij}^2}, \forall A:=(a_{ij})_{i=1,2,\cdots,d,j=i,2,\cdots n}\in \mathbb{R}^{d\times n}, n, d \in\mathbb{N}_+.$ 
\end{assumption}
This assumption guarantees the existence and uniqueness of the strong solution to SDE (\ref{SDE}). Moreover, it follows from standard arguments in SDE \citep[e.g.][chapter 1]{pham2009continuous} that, for any $T>0, p\ge 1$, there exists a constant $L$, such that
\begin{align}\label{Moment}
\mathbb{E}\left[\sup_{t\in[0,T]}\vert X_t \vert ^p \right]\le L(\vert x\vert^p+1),
\end{align}
where $L$ depends on $T$ and $p.$\par
For ease of notation, we introduce the differential operator associated with the process $X$ by
\begin{align*}
\mathcal{L}:=\frac{\partial}{\partial t} + \sum_{i=1}^d\sum_{j=1}^d a_{ij}(t,x)\frac{\partial^2}{\partial x_i\partial x_j}+\sum_{i=1}^d b_i(t,x)\frac{\partial}{\partial x_i},
\end{align*}
where $a_{ij}(t,x) =\frac{1}{2} \sum_{k=1}^n\sigma_{ik}(t,x)\sigma_{jk}(t,x)$.\par 
Suppose that $\mathcal{L}$ is uniformly parabolic for any $T>0$, i.e., there exists a positive constant $\lambda_1>0$, such that for any $\xi=(\xi_1,\xi_2,\ldots,\xi_d)^T\in\mathbb{R}^d,$
\begin{align}\label{Parabolic}
\sum_{i=1}^d\sum_{j=1}^da_{ij}(t,x)\xi_i\xi_j\ge \lambda_1\vert\xi\vert^2  ,\quad  \forall (t,x)\in[0,T]\times\mathbb{R}^d.
\end{align}

\subsection{The optimal stopping problem}
Consider an optimal stopping problem maturing at $T$. Denote the running payoff and the terminal payoff by $f: [0,T]\times\mathbb{R}^d\to\mathbb{R}$ and $g: \mathbb{R}^d\to\mathbb{R}$ respectively.
Define the stopping payoff by $p: [0,T]\times\mathbb{R}^d\to\mathbb{R},$ where $p\in C^{1,2}([0,T]\times\mathbb{R}^d)$. We make use of the following assumptions.
\begin{assumption}\label{Assp:2}
 $f$ is uniformly H\"{o}lder continuous in $t$ with exponent $\frac{1}{2}$. $f,g$ are uniformly Lipschitz continuous in $x$.
\end{assumption}
\begin{assumption}\label{Assp:3}
$p, \frac{\partial p}{\partial t}, \frac{\partial p}{\partial x_i}, \frac{\partial^2 p}{\partial x_i\partial x_j}, i,j=1,2,\ldots,m,$ are uniformly H\"{o}lder continuous in $t$ with exponent $\frac{1}{2}$ and uniformly Lipschitz continuous in $x$. Also, we suppose that
\begin{align*}
\inf_{(t,x)\in[0,T)\times\mathbb{R}^d}(\mathcal{L}p(t,x)-rp(t,x) +f(t,x))>-\infty.
\end{align*}
\end{assumption}
\begin{remark}
We will test our algorithm on an American index put option. It is easy to verify that the stopping and terminal payoff functions of the option satisfy these assumptions.
\end{remark}
 We consider the finite horizon optimal stopping time
 \begin{align}\label{ValueFunction}
 V(t,x) = \sup_{\tau\in\mathcal{T}_{t,T}}\mathbb{E}\left[\int^{\tau}_t e^{-r(s-t)}f(s,X_s)ds +1_{t\le \tau<T}e^{-r(\tau-t)}p(\tau,X_{\tau}) + 1_{\tau=T}e^{-r(T-t)}h(X_T) \vert X_t=x\right],
 \end{align}
 where $h(x) = \max\{g(x),p(T,x)\} $, $r$ is a positive constant and $\mathcal{T}_{t,T}$ denotes the set of stopping times valued in $[t,T].$\par 
 It follows from the standard argument in optimal stopping (e.g., \citeauthor{pham2009continuous} \citeyear{pham2009continuous} and \citeauthor{tou2013} \citeyear{tou2013}) that $V$ is the unique viscosity solution with quadratic growth to the following variational inequality
 \begin{eqnarray} \label{FBPDE}
\left\{\begin{array}{ll}
  \max\{\mathcal{L}V(t,x)-rV(t,x)+f(t,x), p(t,x)-V(t,x)\} = 0,\quad (t,x)\in[0,T)\times \mathbb{R}^d,\\
V(T,x) = h(x),\quad x\in\mathbb{R}^d.
\end{array}\right.
\end{eqnarray}
\section{The Penalty Approximation and Penalization Error}\label{sec:penalty}
We introduce the penalized PDE that approximates the variational inequality (\ref{FBPDE}) as follows,
\begin{eqnarray} \label{PPDE}
\left\{\begin{array}{ll}
 \mathcal{L}V^{\lambda}(t,x)-rV^{\lambda}(t,x)+f(t,x)+\lambda (p(t,x)-V^{\lambda}(t,x))^+ = 0\quad (t,x)\in[0,T)\times \mathbb{R}^d,\\
V^{\lambda}(T,x) = h(x),\quad x\in\mathbb{R}^d.
\end{array}\right.
\end{eqnarray}
\begin{proposition}\label{Prop:penaltyTerm}
Suppose that Assumptions (\ref{Assp:1})-(\ref{Assp:3}) hold and $V^{\lambda}$ is the classical solution with quadratic growth to the PDE problem (\ref{PPDE}), then
\begin{align*}
\lambda (p(t,x)-V^{\lambda}(t,x))^+ \le C, \quad \forall (t,x,\lambda)\in[0,T]\times\mathbb{R}^d\times\mathbb{R}_+,
\end{align*}
where $C = -\inf_{(t,x)\in[0,T)\times\mathbb{R}^d}(\mathcal{L}p(t,x)-rp(t,x) +f(t,x)).$
\end{proposition}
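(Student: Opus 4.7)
The plan is to set $\phi(t, x) := p(t, x) - V^\lambda(t, x)$ and prove the pointwise estimate $\phi(t, x) \le C/(r+\lambda) \le C/\lambda$, which immediately yields $\lambda \phi^+(t, x) \le C$ (the case $\phi \le 0$ being trivial). The first step is to combine the penalized PDE (\ref{PPDE}) with Assumption \ref{Assp:3} to produce a differential inequality for $\phi$. Substituting $\mathcal{L}V^\lambda - rV^\lambda = -f - \lambda \phi^+$ from (\ref{PPDE}) gives, by direct computation,
\[
\mathcal{L}\phi(t,x) - r\phi(t,x) - \lambda\phi^+(t, x) = (\mathcal{L}p - rp + f)(t, x) \ge -C.
\]
Restricting to the open set $D := \{(t,x) \in [0, T) \times \mathbb{R}^d : \phi(t,x) > 0\}$, where $\phi^+ = \phi$, this becomes $\mathcal{L}\phi - (r+\lambda)\phi \ge -C$. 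The terminal data are also favorable: $\phi(T, x) = p(T, x) - h(x) \le 0$ since $h(x) \ge p(T, x)$ by the very definition of $h$.

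The second step is a Feynman-Kac argument. Fix $(t_0, x_0)$ with $\phi(t_0, x_0) > 0$, let $(X_s)_{s \ge t_0}$ solve (\ref{SDE}) starting from $x_0$, and set $\tau := \inf\{s \ge t_0 : \phi(s, X_s) \le 0\} \wedge T$. Applying It\^{o}'s formula to $e^{-(r+\lambda)(s - t_0)}\phi(s, X_s)$ on $[t_0, \tau]$, using the inequality $\mathcal{L}\phi - (r+\lambda)\phi \ge -C$ which is valid throughout $[t_0, \tau)$, and taking expectations (after localization), I obtain
\[
\phi(t_0, x_0) \le \mathbb{E}\bigl[e^{-(r+\lambda)(\tau - t_0)} \phi(\tau, X_\tau)\bigr] + \frac{C}{r+\lambda}\, \mathbb{E}\bigl[1 - e^{-(r+\lambda)(\tau - t_0)}\bigr].
\]
By path continuity of $X$ and continuity of $\phi$, either $\tau < T$ with $\phi(\tau, X_\tau) = 0$, or $\tau = T$ with $\phi(T, X_T) \le 0$; in both cases the first expectation is nonpositive. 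Hence $\phi(t_0, x_0) \le C/(r+\lambda) \le C/\lambda$, as required.

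The principal technical obstacle is making the It\^{o}-plus-expectation step rigorous: since $V^\lambda$ is only assumed to have quadratic growth and the spatial domain is unbounded, the stochastic integral appearing in the It\^{o} expansion is a priori only a local martingale. I would handle this via the standard localization $\tau_N := \tau \wedge \inf\{s \ge t_0 : |X_s| \ge N\}$, running the argument on $[t_0, \tau_N]$ where every integrand is bounded, and then passing to the limit $N \to \infty$ using the moment bound (\ref{Moment}), the linear growth of $\sigma$ from Assumption \ref{Assp:1}, and standard interior estimates on $\nabla V^\lambda$ (consequences of the $C^{1,2}$ regularity hypothesis together with uniform parabolicity (\ref{Parabolic}) and Lipschitz coefficients). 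Dominated convergence then eliminates the martingale contribution and preserves the inequality in the limit. A purely analytic alternative would be to apply a parabolic maximum-principle argument to $\phi - C/(r+\lambda)$ on $D$, taming the unbounded spatial domain by subtracting an auxiliary function of the form $\epsilon e^{K(T-t)}(1+|x|^2)$ and sending $\epsilon \downarrow 0$.
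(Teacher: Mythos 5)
Your argument is correct, but it proves the bound by a genuinely different route from the paper. You work probabilistically: after deriving the differential inequality $\mathcal{L}\phi-(r+\lambda)\phi\ge -C$ on the set $\{\phi>0\}$ for $\phi=p-V^{\lambda}$, you represent $\phi(t_0,x_0)$ via It\^{o}'s formula along the diffusion up to the exit time of $\{\phi>0\}$, use the nonpositive terminal/boundary data $\phi(\tau,X_\tau)\le 0$, and kill the local-martingale term by localization together with the moment bound (\ref{Moment}) and the quadratic growth of $V^\lambda$. The paper instead runs a purely analytic maximum-principle argument: it rescales by $e^{at}$, constructs an explicit barrier $\epsilon Q(x)=\epsilon(|x|^{2k}+d')$ satisfying $\mathcal{L}Q<(r+a)Q$ to force the perturbed function $l=p_0-w$ to be negative for $|x|$ large (so that a positive maximum, if any, is attained at an interior point), evaluates the inequality at that maximum, and then sends $\epsilon\to 0$. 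The two approaches are close cousins --- your ``purely analytic alternative'' sketched at the end is essentially the paper's proof --- but yours trades the explicit barrier construction for the localization/integrability bookkeeping of the stochastic integral, and it yields the marginally sharper constant $\lambda C/(r+\lambda)$ in place of $C$. One small point worth making explicit in either approach: the conclusion implicitly presumes $C\ge 0$ (if $\inf(\mathcal{L}p-rp+f)>0$ your argument shows $\{\phi>0\}$ is empty, so the left-hand side is $0$, which is the correct reading of the statement); this does not affect the validity of your proof.
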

\begin{proof}
Let $u(t,x) = e^{at}V^{\lambda}(t,x)$, where $a$ is to be chosen later, then it follows from some algebra that
\begin{align*}
\mathcal{L}u(t,x)-(r+a)u(t,x)+f_0(t,x)+\lambda (p_0(t,x)-u(t,x))^+ = 0\quad (t,x)\in[0,T)\times \mathbb{R}^d,
\end{align*}
with the terminal condition
\begin{align*}
u(T,x) = h_0(x),\quad x\in\mathbb{R}^d,
\end{align*}
where $p_0(t,x) = e^{at}p(t,x), h_0(x) = e^{aT}h(x), f_0(t,x) = e^{at}f(t,x).$\par 
Define $Q(x) = \vert x\vert^{2k}+d',$ where $k>2, d'>0$, then some algebra yields that
\begin{empheq}[left=\empheqlbrace]{align}\label{inequality:e1}
  b_i(t,x)\frac{\partial Q}{\partial x_i}(x) &= 2kb_i(t,x)\vert x\vert^{2k-2}x_i,\quad i = 1,2,\dots,d,\nonumber\\
a_{i,j}(t,x)\frac{\partial^2 Q}{\partial x_i\partial x_j}(x) &= 4k(k-1)a_{i,j}(t,x) \vert x\vert^{2k-4} x_ix_j, \quad i\neq j, i,j =1,2,\dots,d,\\
a_{i,i}(t,x)\frac{\partial^2 Q}{\partial x_i^2}(x) &= a_{i,i}(t,x)(4k(k-1)\vert x\vert^{2k-4} x_i^2 + 2k\vert x\vert^{2k-2}),\quad i = 1,2,\dots,d.\nonumber
\end{empheq}
Note that $b(t,x), \sigma_{i,j}(t,x)$ satisfy (\ref{Lip}), $b_{i}$ has linear growth in $x$ and $a_{ij}(t,x)$ has quadratic growth in $x$, i.e., there exist positive constants $c_1,d_1, c_2, d_2$ such that
\begin{empheq}[left=\empheqlbrace]{align}
  \vert b_i(t,x) \vert  &< c_1\vert x\vert+d_1, \quad i = 1,2,\dots,d,\nonumber\\
 \vert a_{ij}(t,x)\vert &< c_2\vert x\vert^2+d_2, \quad  i,j =1,2,\dots,d.\nonumber
\end{empheq}
Thus it is easy to see that $\mathcal{L}Q(x)$ has polynomial growth with order $2k$. Indeed, combining the above inequalities with (\ref{inequality:e1}), we then have
\begin{empheq}[left=\empheqlbrace]{align}\label{inequality:e2}
\bigg | b_i(t,x)\frac{\partial Q}{\partial x_i}(x)\bigg | &< 2kc_1\vert x\vert^{2k} + 2kd_1\vert x\vert^{2k-1} ,\quad i = 1,2,\dots,d,\nonumber\\
\bigg | a_{i,j}(t,x)\frac{\partial^2 Q}{\partial x_i\partial x_j}(x) \bigg | &< 4k(k-1)c_2\vert x\vert^{2k}+4k(k-1)d_2\vert x\vert^{2k-2} , \quad i\neq j, i,j =1,2,\dots,d,\\
\bigg | a_{i,i}(t,x)\frac{\partial^2 Q}{\partial x_i^2}(x)\bigg | &< (4k(k-1)c_2+2kc_2)\vert x\vert^{2k}+(4k(k-1)d_2+2kd_2)\vert x\vert^{2k-2},\quad i = 1,2,\dots,d.\nonumber
\end{empheq}
Therefore, there exist constants $a>0,d'>0,$ such that
\begin{align}\label{inequality:e3}
\mathcal{L}Q(x) < (r+a)Q(x).
\end{align}
Let $w(t,x) = u(t,x) + \epsilon Q(x), \epsilon>0,$ then it is to see that
\begin{align*}
\mathcal{L}w(t,x) - (r+a)w &= \mathcal{L}u(t,x)-(r+a)u(t,x)+\epsilon(\mathcal{L}Q(x)-(r+a)Q(x))\\&=-\lambda (p_0(t,x)-u(t,x))^+-f_0(t,x) +\epsilon(\mathcal{L}Q(x)-(r+a)Q(x))\\&\le -\lambda (p_0(t,x)-w(t,x))^+-f_0(t,x) +\epsilon(\mathcal{L}Q(x)-(r+a)Q(x))&.
\end{align*}
Then the above inequality and (\ref{inequality:e3}) result in
\begin{align*}
\lambda (p_0(t,x)-w(t,x))^+\le -(\mathcal{L}w(t,x) -(r+a)w(t,x)+f_0(t,x)).
\end{align*}
Let $l(t,x) = p_0(t,x)-w(t,x),$ then the above inequality yields that
\begin{align}\label{inequality:e4}
\lambda l(t,x)^+ \le \mathcal{L}l(t,x) - (r+a)l(t,x) -(\mathcal{L}p_0(t,x)-(r+a)p_0(t,x) +f_0(t,x))
\end{align}
Note that $p_0$ has linear growth in $x$ and $w$ has polynomial growth with order $k>2$, then we have that there exists $M>0,$ such that $l(t,x)<0,$ whenever $\vert x\vert >M$. Also, it is easy to see that 
\begin{align*}
l(T,x)&= e^{aT}(p(T,x)-h(x)) -\epsilon Q(x)\\&= e^{aT}(p(T,x)-\max\{g(x),p(T,x)\})-\epsilon Q(x)\\&\le 0. 
\end{align*}
Suppose $l$ takes the positive maximum at $(t_0,x_0),$ if it exists, then $(t_0,x_0)\in[0,T)\times (-M,M)^m.$ Thus, it is easy to see that $ \mathcal{L}l(t_0,x_0) - (r+a)l(t_0,x_0)\le 0,$ which, together with (\ref{inequality:e4}), yields that
\begin{align}\label{inequality:e5}
\lambda l(t,x)^+ &\le  -(\mathcal{L}p(t_0,x_0)-(r+a)p(t_0,x_0) +f_0(t_0,x_0))\nonumber\\&\le -\inf_{(t,x)\in[0,T)\times\mathbb{R}^d}(\mathcal{L}p_0(t,x)-(r+a)p_0(t,x) +f_0(t,x))\nonumber\\& =-e^{at}\inf_{(t,x)\in[0,T)\times\mathbb{R}^d}(\mathcal{L}p(t,x)-rp(t,x) +f(t,x)) ,
\end{align}
for all $(t,x)\in[0,T)\times\mathbb{R}^d.$\par 
Note that 
\begin{align*}
\lambda (p(t,x)-V^{\lambda}(t,x))^+&=e^{-at}\lambda (p_0(t,x)-u(t,x))^+\\&=e^{-at}\lambda (p_0(t,x)-w(t,x)+\epsilon Q(x))^+\\&=e^{-at}\lambda (l(t,x)+\epsilon Q(x))^+\\&\le e^{-at}\lambda l(t,x)^++e^{-at}\lambda\epsilon Q(x).
\end{align*}
Then it follows from  (\ref{inequality:e5}) that
\begin{align*}
\lambda (p(t,x)-V^{\lambda}(t,x))^+\le -\inf_{(t,x)\in[0,T)\times\mathbb{R}^d}(\mathcal{L}p(t,x)-rp(t,x) +f(t,x)) +e^{-at}\lambda\epsilon Q(x).
\end{align*}
Finally, let $\epsilon\to 0,$ then we obtain
\begin{align*}
\lambda (p(t,x)-V^{\lambda}(t,x))^+\le -\inf_{(t,x)\in[0,T)\times\mathbb{R}^d}(\mathcal{L}p(t,x)-rp_(t,x) +f(t,x)).
\end{align*}
This complete the proof. 
\end{proof}
\begin{theorem}\label{Prop:penaltyError}Suppose that $V^{\lambda}$ is the classical solution to the PDE problem (\ref{PPDE}) and $V$ defined by (\ref{ValueFunction}) is the value function of the optimal stopping time. Then
$$
0\le V(t,x)-V^{\lambda}(t,x)\le \frac{C}{\lambda},\forall (t,x,\lambda)\in[0,T]\times\mathbb{R}^d\times\mathbb{R}_+,
$$ where  $C = -\inf_{(t,x)\in[0,T)\times\mathbb{R}^d}(\mathcal{L}p(t,x)-rp(t,x) +f(t,x)).$
\end{theorem}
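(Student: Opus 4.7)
My plan is to establish the two inequalities separately, using a comparison argument for the lower bound $V\ge V^\lambda$ and a direct stochastic representation (Feynman--Kac) argument combined with Proposition \ref{Prop:penaltyTerm} for the upper bound $V-V^\lambda\le C/\lambda$.

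For the lower bound, the key observation is that $V$ is itself a (viscosity) supersolution of the penalized PDE (\ref{PPDE}). Indeed, from the variational inequality (\ref{FBPDE}) we have $V\ge p$ and $\mathcal{L}V-rV+f\le 0$ in the viscosity sense. Since $V\ge p$ implies $(p-V)^+=0$, one obtains
\begin{align*}
\mathcal{L}V(t,x)-rV(t,x)+f(t,x)+\lambda(p(t,x)-V(t,x))^+\le 0,
\end{align*}
with $V(T,x)=h(x)$. The nonlinearity $v\mapsto -\lambda(p-v)^+$ appearing in (\ref{PPDE}) is non-decreasing in $v$, which is exactly the monotonicity needed for the standard comparison principle between a classical subsolution (here $V^\lambda$) and a viscosity supersolution of quadratic growth (here $V$). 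The comparison yields $V\ge V^\lambda$.

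For the upper bound, I would apply It\^o's formula to $e^{-r(s-t)}V^\lambda(s,X_s)$ on $[t,\tau]$ for an arbitrary stopping time $\tau\in\mathcal{T}_{t,T}$, invoke the PDE (\ref{PPDE}) to replace $\mathcal{L}V^\lambda-rV^\lambda$ by $-f-\lambda(p-V^\lambda)^+$, and take expectations (the stochastic integral is a true martingale by the quadratic growth of $V^\lambda$ together with the moment estimate (\ref{Moment})). This gives the representation
\begin{align*}
V^\lambda(t,x)=\mathbb{E}\Bigl[\int_t^\tau e^{-r(s-t)}\bigl(f(s,X_s)+\lambda(p(s,X_s)-V^\lambda(s,X_s))^+\bigr)ds+e^{-r(\tau-t)}V^\lambda(\tau,X_\tau)\Bigr].
\end{align*}
Dropping the non-negative penalty term under the integral, and using $V^\lambda(\tau,X_\tau)\ge p(\tau,X_\tau)-C/\lambda$ on $\{\tau<T\}$ (Proposition \ref{Prop:penaltyTerm}) together with $V^\lambda(T,\cdot)=h$, I obtain
\begin{align*}
V^\lambda(t,x)\ge \mathbb{E}\Bigl[\int_t^\tau e^{-r(s-t)}f\,ds+1_{\tau<T}e^{-r(\tau-t)}p(\tau,X_\tau)+1_{\tau=T}e^{-r(T-t)}h(X_T)\Bigr]-\tfrac{C}{\lambda},
\end{align*}
since $e^{-r(\tau-t)}\le 1$. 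Taking the supremum over $\tau\in\mathcal{T}_{t,T}$ delivers $V^\lambda(t,x)\ge V(t,x)-C/\lambda$.

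The main technical nuisance, as opposed to real obstacle, is justifying the comparison step cleanly given that $V$ is only a viscosity solution while $V^\lambda$ is classical with quadratic growth; this requires citing (or verifying) a comparison principle for (\ref{PPDE}) in the class of quadratic-growth functions on $[0,T]\times\mathbb{R}^d$, which follows from the monotonicity of the penalty nonlinearity and standard arguments using the polynomial test function $Q$ built in the proof of Proposition \ref{Prop:penaltyTerm}. Everything else reduces to It\^o's formula and the already-established pointwise bound on $\lambda(p-V^\lambda)^+$.
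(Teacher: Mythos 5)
Your proposal is correct, and its second half takes a genuinely different route from the paper's. For the lower bound $V\ge V^{\lambda}$ the two arguments are essentially the same comparison-principle idea: the paper linearizes the difference of the penalty terms, notes that the resulting zeroth-order coefficient $\beta\le 0$, and compares $V-V^{\lambda}$ with $0$; your observation that $V\ge p$ forces $(p-V)^+=0$ and hence makes $V$ a supersolution of (\ref{PPDE}) is a repackaging of the same monotonicity, with the same (in both write-ups somewhat glossed-over) need for a comparison principle in the quadratic-growth class. The real divergence is in the upper bound. The paper stays on the PDE side: it splits the domain into the stopping region, where $V=p$ and Proposition \ref{Prop:penaltyTerm} gives $V-V^{\lambda}\le C/\lambda$ directly, and the continuation region, where $\mathcal{L}(V-V^{\lambda})-r(V-V^{\lambda})\ge 0$ and the maximum principle pushes the positive maximum of $V-V^{\lambda}$ to the boundary of the continuation region, i.e.\ back into the stopping region. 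Your argument is probabilistic: It\^o/Feynman--Kac for $e^{-rs}V^{\lambda}(s,X_s)$ up to an arbitrary $\tau\in\mathcal{T}_{t,T}$, discarding the non-negative penalty integrand, bounding $V^{\lambda}(\tau,X_\tau)\ge p(\tau,X_\tau)-C/\lambda$ via Proposition \ref{Prop:penaltyTerm}, and taking the supremum over $\tau$. This buys you two things: you never invoke the free-boundary structure or a maximum principle on the unbounded continuation region, and you work directly from the definition (\ref{ValueFunction}) of $V$ as a supremum rather than from its PDE characterization --- which is arguably cleaner given that $V$ is only a viscosity solution. The price is the usual verification-theorem bookkeeping (localization plus uniform integrability from (\ref{Moment}) to make the stochastic integral a true martingale), which you correctly flag. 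One small point worth making explicit: the discarded term in your last display is $-\frac{C}{\lambda}\mathbb{E}\bigl[1_{\tau<T}e^{-r(\tau-t)}\bigr]\ge -\frac{C}{\lambda}$ only because $C\ge 0$, which is itself forced by Proposition \ref{Prop:penaltyTerm} since its left-hand side is non-negative.
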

\begin{proof}
First, it follows from the PDE problems (\ref{FBPDE}) and (\ref{PPDE}) that
\begin{align*}
\mathcal{L}( V(t,x)-V^{\lambda}(t,x))-r (V(t,x)-V^{\lambda}(t,x))\le \lambda (p(t,x)-V^{\lambda}(t,x))^+.
\end{align*}
Note that $\lambda (p(t,x)-V(t,x))^+ = 0$, then we have
\begin{align*}
\mathcal{L}( V(t,x)-V^{\lambda}(t,x))-r (V(t,x)-V^{\lambda}(t,x))\le\lambda (p(t,x)-V^{\lambda}(t,x))^+-\lambda (p(t,x)-V(t,x))^+  ,
\end{align*}
which yields 
\begin{align*}
\mathcal{L}( V(t,x)-V^{\lambda}(t,x))-(r-\beta(t,x)) (V(t,x)-V^{\lambda}(t,x))\le0,
\end{align*}
where $\beta(t,x) =\frac{\lambda (p(t,x)-V(t,x))^+  -\lambda (p(t,x)-V^{\lambda}(t,x))^+}{V(t,x)-V^{\lambda}(t,x)}1_{V(t,x)-V^{\lambda}(t,x)\neq 0}.$\par 
Note that $\beta(t,x)\le 0, \forall (t,x)\in [0.T)\times \mathbb{R}^d,$ then it follows from the comparison principle and $V(T,x) = V^{\lambda}(T,x)$ that $V(t,x)\ge V^{\lambda}(t,x), \forall (t,x)\in [0.T)\times \mathbb{R}^d.$\par 
Second, in the stopping region $\mathcal{S}$, we have that
\begin{align*}
V(t,x)-V^{\lambda}(t,x) = p(t,x)-V^{\lambda}(t,x)\le (p(t,x)-V^{\lambda}(t,x))^+,
\end{align*}
then Proposition \ref{Prop:penaltyTerm} yields that
\begin{align}\label{inequality:e6}
 \sup_{(t,x)\in\mathcal{S}}(V(t,x)-V^{\lambda}(t,x) )\le  \frac{C}{\lambda}.
\end{align}
In the continuation region $\mathcal{C}$, we have that
\begin{align*}
\mathcal{L}(V(t,x)-V^{\lambda}(t,x)) = r(V(t,x)-V^{\lambda}(t,x))+\lambda (p(t,x)-V^{\lambda}(t,x))^+\ge 0.
\end{align*}
Therefore, it follows from the maximum principle that the positive maximum of $V(t,x)-V^{\lambda}(t,x)$ is taken at the boundary of the continuation region, which, together with (\ref{inequality:e6}), gives that
\begin{align*}
\sup_{(t,x)\in\mathcal{C}}(V(t,x)-V^{\lambda}(t,x))\le \sup_{(t,x)\in\mathcal{S}}(V(t,x)-V^{\lambda}(t,x))\le  \frac{C}{\lambda}.
\end{align*}
This completes the proof.
\end{proof}
\begin{remark} Theorem \ref{Prop:penaltyError} shows the error bound of the penalty approximation. Results in this direction in various settings have appeared in literature. For example, \cite{liang2015stochastic} obtains that the error (in $L^2$ sense) due to the penalty approximation is bounded by $O(\frac{1}{\sqrt{\lambda}})$ in a general non-Markovian model.
\end{remark}
\section{The Algorithm and Error Analysis}\label{sec:DPM}

The penalty representation (\ref{PPDE}), reducing the variational inequality to a semi-linear PDEs, 
allows us to develop new algorithm based on the Deep BSDE method established by \cite{weinan2017deep}. Before introducing the algorithm, let us make the following transformation. Let $U(t,x): = (V^{\lambda}(t,x) - p(t,x))e^{-rt}, f_1(t,x) := (f(t,x) +\mathcal{L}p(t,x)-rp(t,x))e^{-rt}, h_1(x) := (h(x) - p(T,x))e^{-rT}, \forall (t,x)\in [0,T]\times\mathbb{R}^d.$ Then it follows from (\ref{PPDE}) that 
\begin{eqnarray} \label{MPPDE}
\left\{\begin{array}{ll}
 \mathcal{L}U(t,x)+f_1(t,x)+\lambda (-U(t,x))^+ = 0\quad (t,x)\in[0,T)\times \mathbb{R}^d,\\
U(T,x) = h_1(x),\quad x\in\mathbb{R}^d.
\end{array}\right.
\end{eqnarray}
It is well known that the stochastic process pair $(Y_t:=U(t,X_t), Z_t:=\sigma^T(t,X_t)\nabla_x U(t,X_t))$, where $X_t$ is given by (\ref{SDE}), is the unique solution to the following BSDE problem\footnote{See, e.g., \cite{pardoux2005backward} and \cite{pardoux1999forward} for more details.},
\begin{eqnarray} \label{BSDE}
\left\{\begin{array}{ll}
 dY_t = (-f_1(t,X_t) - \lambda (-Y_t)^+)dt + Z^T_tdW_t,\\
Y_T = h_1(X_T).
\end{array}\right.
\end{eqnarray}
We will focus on the numerical analysis of the BSDE problem (\ref{BSDE}) throughout this section.
\subsection{The discretization} 
Let $\pi: 0=t_0<t_1<t_2<\dots<t_N = T$ be a partition of time interval $[0,T],$ with $h = \frac{T}{N}, t_i = ih, i = 0,1,2,\dots,N. $ The underlying dynamics $X$ is approximated by the classical Euler scheme.
\begin{align}\label{DDynamics}
X^{\pi}_{t_i} = X^{\pi}_{t_{i-1}} + b(t_{i-1},X^{\pi}_{t_{i-1}})h + \sigma(t_{i-1},X^{\pi}_{t_{i-1}})(W_{t_i}-W_{t_{i-1}}),\quad i = 1,2,\dots,N,
\end{align}
with $X^{\pi}_{t_0} =X_{t_0}.$\par 
Next we consider the implicit scheme for the BSDE (\ref{BSDE}).
\begin{align}\label{DDBSDE}
Y^{\pi}_{t_{i+1}} = Y^{\pi}_{t_{i}}- (f_1(t_i,X^{\pi}_{t_i}) + \lambda (-Y^{\pi}_{t_i})^+)h + (Z^{\pi }_{t_i})^T(W_{t_{i+1}}-W_{t_i}),\quad \forall i = 0,1,2,\dots,N-1.
\end{align}
with $Y^{\pi}_{t_N} = h_1(X_T^{\pi}), Z_{t_i}^{\pi} = \frac{1}{h}\mathbb{E}_{t_i}[Y_{t_{i+1}}(W_{t_{i+1}}-W_{t_i})].$\par 
\subsection{The network}
In contrast to the ``local'' approximation strategy introduced in the Deep BSDE framework by \cite{weinan2017deep} and \cite{han2018solving}---which employs a distinct neural network for each discrete time step $t_i$---we adopt a global approximation approach. By utilizing a single, integrated neural network $\mathcal{Z}$ to represent the function $Z$ across the entire spatio-temporal domain, we define the approximation $Z^{\pi}$ as a function of both time and state:

\begin{equation}\label{Network}
Z^{\pi}(t_i, X^{\pi}_{t_i}) \approx \mathcal{Z}(t_i, X^{\pi}_{t_i} \mid \theta), \quad i = 0, 1, \dots, N-1,
\end{equation}

where $\theta = \{\theta_j\}_{j=1}^{J}$ denotes the vector of trainable parameters within the network. \par 
Equipping the discretized BSDE (\ref{DDBSDE}) with the neural network, we have the numerical scheme given by
\begin{align}\label{NetworkBSDE}
\mathcal{V}^{\pi,\theta}_{t_{i+1}}  = \mathcal{V}^{\pi,\theta}_{t_i} - f_1(t_i, X^{\pi}_{t_i})h- \lambda(-\mathcal{V}^{\pi,\theta}_{t_i})^+ h+\mathcal{Z}(t_i,X^{\pi}_{t_i}\mid \theta)^T ( W_{t_{i+1}} - W_{t_i})  , \quad \mathcal{V}^{\pi,\theta}_0 = v.
\end{align}


The transition from a discrete local approximation to a global spatio-temporal network $\mathcal{Z}(t, X \mid \theta)$ offers a computational advantage over the traditional Deep BSDE framework. In conventional BSDE schemes, the GPU must frequently synchronize with the CPU to launch a new kernel for each time step’s specific network, leading to high latency. In contrast, the DPM utilizes a single spatio-temporal network $\mathcal{Z}(t, X \mid \theta)$. This not only mitigates the risk of memory exhaustion in high-dimensional settings but, more crucially, unlocks the power of spatio-temporal vectorization. By representing the solution as a single continuous function of both time and state, we can collapse the entire temporal dimension and batch dimension into a single composite input space during training. This enables the GPU to evaluate all time steps and batches across all simulated paths in a single, synchronized kernel execution. Unlike the local approach, which suffers from the cumulative latency of $N$ sequential CPU-GPU handshakes, our vectorized paradigm utilizes the hardware much more efficiently. This significantly enhances throughput and accelerates convergence by ensuring the GPU remains compute-bound rather than stalled by synchronization overhead. See Figure \ref{fig:flow_chart} for the data flow for the neutral network.\par 
\begin{figure}[htbp]
    \centering
    \includegraphics[width=\textwidth]{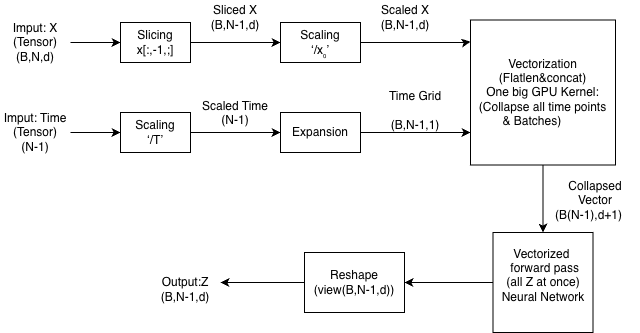}
    \caption{Data flow for the neural network for $Z$. }
    \label{fig:flow_chart}
\end{figure}

\subsection{The cost function}
Define the cost function $\mathbb{E}\vert \mathcal{V}^{\theta}_{t_N}-h_1(X^{\pi}_{t_N})\vert$ used in the Deep BSDE method  by $\iota(v,\theta)$, we then employ a stochastic optimizer to solve the following stochastic optimization problem
\begin{align}\label{Pr:ControNetwork}
\inf_{v\in\mathbb{R},\theta\in\mathbb{R}^J }\iota(v,\theta),
\end{align}
subject to (\ref{DDynamics}) and (\ref{NetworkBSDE}).\par
While conventional Deep BSDE solvers typically employ a Mean Squared Error (MSE) loss function, the DPM utilizes an $L^1$ loss in this paper. The choice of the $L^1$ loss is motivated by the error analysis (Theorem \ref{Thm:errorBounds}). In the numerical experiments, we study the robustness of the choice of the loss functions by comparing the computational performance between the MSE and $L^1$ loss functions.

\subsection{The discretization error}
This section focuses on the analysis of the discretization error measured by $\sup_{0\le i\le N}\mathbb{E}\vert Y^{\pi}_{t_i} - Y_{t_i}\vert$. We start with the following lemma which will be used to prove the main result. The proof of the lemma can be found in standard textbooks (see, e.g., \cite{platen2010numerical}), and thus we omit it.
\begin{lemma} \label{Lem:1}
Suppose that Assumptions (\ref{Assp:1})-(\ref{Assp:3}) hold, then
\begin{align*}
&\mathbb{E} |\psi(t,X_t) - \psi(s,X_s)| \le K_1 \sqrt{s-t},\quad \forall 0\le t<s \le T,\\
&\mathbb{E}\left[\sup_{t\in[0,T]}\vert\psi(t, X_t) \vert  \right]\le K_1,\quad \forall 0\le t<s \le T,\\
&\mathbb{E} \left[\sup_{t\in[0,T]}|\psi(t,X_t) - \psi(t,X^{\pi}_t)|\right] \le K_1 \sqrt{h},
\end{align*}
where $\psi\in\{f_1,h_1\},$ $K_1$ is a constant depending on $T,X_0.$
\end{lemma}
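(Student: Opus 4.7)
The plan is to reduce each of the three bounds to standard moment or strong-convergence estimates for the SDE (\ref{SDE}) and its Euler approximation (\ref{DDynamics}), after first extracting the regularity inherited by $\psi \in \{f_1,h_1\}$ from Assumptions \ref{Assp:2}--\ref{Assp:3}. Writing $f_1(t,x) = (f(t,x) + \mathcal{L}p(t,x) - rp(t,x))e^{-rt}$ and $h_1(x) = (h(x) - p(T,x))e^{-rT}$, both functions inherit uniform $\tfrac{1}{2}$-H\"older regularity in $t$ and Lipschitz-type regularity in $x$ from Assumptions \ref{Assp:2}--\ref{Assp:3} (possibly only in a polynomial-growth sense for the $a_{ij}\partial^2_{ij}p$ piece of $\mathcal{L}p$). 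This regularity is what I would use below in each step.

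For the first estimate I would split
\[
|\psi(t,X_t) - \psi(s,X_s)| \le L(s-t)^{1/2} + L|X_t - X_s|,
\]
take expectations, and invoke the classical SDE increment bound $\mathbb{E}|X_s - X_t|^2 \le C(1+|x|^2)(s-t)$, which follows from (\ref{SDE}) via It\^o's isometry, the linear growth of $b,\sigma$ implied by Assumption \ref{Assp:1}, and (\ref{Moment}); Cauchy-Schwarz then closes the bound. For the second estimate, the Lipschitz-in-$x$ plus H\"older-in-$t$ control of $\psi$ gives $|\psi(t,X_t)| \le |\psi(0,0)| + LT^{1/2} + L|X_t|$, and taking $\sup_{t\in[0,T]}$ followed by expectation and applying (\ref{Moment}) with $p=1$ bounds the right-hand side by a constant depending only on $T$ and $X_0$.

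For the third estimate, Lipschitz continuity of $\psi$ in $x$ reduces the task to establishing the classical Euler-Maruyama strong convergence rate
\[
\mathbb{E}\Big[\sup_{t\in[0,T]}|X_t - X_t^\pi|^2\Big] \le C h,
\]
after which Cauchy-Schwarz yields the stated $\sqrt{h}$ bound. The strong convergence of order $\tfrac{1}{2}$ follows under Assumption \ref{Assp:1} in the standard way: one writes the difference $X_t - X_t^\pi$ from (\ref{SDE}) and (\ref{DDynamics}), applies Doob/BDG to the stochastic integral, uses Lipschitz continuity of $b,\sigma$ in $x$ together with $\tfrac{1}{2}$-H\"older continuity in $t$ to produce a one-step local error of order $h$, and finally closes with Gronwall's inequality.

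The main technical obstacle lies not in any of the three bounds individually but in the preliminary regularity check on $f_1$: because $a_{ij}$ has quadratic growth in $x$ while $\partial^2_{ij}p$ has only linear growth (being Lipschitz), $\mathcal{L}p$ need not be globally Lipschitz in $x$ in the strictest sense. The standard workaround is to allow polynomial-growth Lipschitz estimates of the form $|\psi(t,x) - \psi(t,y)| \le L(1 + |x|^q + |y|^q)|x-y|$ for some $q \ge 0$ and then upgrade each application of (\ref{Moment}) to a sufficiently large $p$ via H\"older's inequality; the resulting constants depend on $T$ and $X_0$ but the orders $\sqrt{s-t}$, $O(1)$ and $\sqrt{h}$ are preserved, exactly as stated in the lemma.
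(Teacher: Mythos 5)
Your proof is correct and is precisely the standard argument the paper defers to (the paper omits the proof entirely, citing a standard textbook): transfer the $\tfrac{1}{2}$-H\"older-in-$t$ and Lipschitz-in-$x$ regularity of $f_1,h_1$ to the process, then invoke the SDE increment bound, the moment bound (\ref{Moment}), and the order-$\tfrac{1}{2}$ strong convergence of the Euler scheme. Your observation that $\mathcal{L}p$ is only locally Lipschitz with a polynomially growing constant---because $a_{ij}$ grows quadratically while $\partial^2_{ij}p$ is merely Lipschitz---so that one must use polynomial-weighted Lipschitz estimates together with (\ref{Moment}) for large $p$, is a genuine point the paper glosses over, and your workaround is the correct one.
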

\begin{theorem}\label{Thm:discretization}
 Suppose that Assumptions (\ref{Assp:1})-(\ref{Assp:3}) hold, then
\begin{align*}
\sup_{0\le i\le N}\mathbb{E}\vert Y^{\pi}_{t_i} - Y_{t_i}\vert\le C_0 (\sqrt{h} + \lambda h), 
\end{align*}
where $C_0$ is a constant depending on $T, X_0.$
\end{theorem}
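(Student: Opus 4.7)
The plan is to establish a one-step error recursion for $e_i := Y_{t_i} - Y^{\pi}_{t_i}$ and then iterate backward from the terminal time. First, I would integrate the BSDE \eqref{BSDE} over $[t_i,t_{i+1}]$ and take $\mathbb{E}_{t_i}[\cdot]$; the same conditional expectation applied to the implicit scheme \eqref{DDBSDE} annihilates both stochastic integrals. Subtracting produces
\[
e_i = \mathbb{E}_{t_i}[e_{i+1}] + R^f_i + \lambda R^p_i,
\]
where $R^f_i := \mathbb{E}_{t_i}\!\int_{t_i}^{t_{i+1}}[f_1(s,X_s) - f_1(t_i,X^{\pi}_{t_i})]\,ds$ is the driver discretization error and $R^p_i := \mathbb{E}_{t_i}\!\int_{t_i}^{t_{i+1}}[(-Y_s)^+ - (-Y^{\pi}_{t_i})^+]\,ds$ is the penalty discretization error.

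The key second step would exploit the implicit scheme's monotone structure to avoid an $e^{\lambda T}$ amplification. I would decompose $(-Y_s)^+ - (-Y^{\pi}_{t_i})^+ = [(-Y_s)^+ - (-Y_{t_i})^+] + [(-Y_{t_i})^+ - (-Y^{\pi}_{t_i})^+]$ and apply a Lipschitz-quotient representation to the non-increasing function $y\mapsto(-y)^+$, writing the last bracket as $g_i e_i$ for some $\mathcal{F}_{t_i}$-measurable $g_i\in[-1,0]$. Moving this term to the left yields $(1 - \lambda h g_i)\,e_i = \mathbb{E}_{t_i}[e_{i+1}] + R^f_i + \lambda\tilde R^p_i$, with $\tilde R^p_i := \mathbb{E}_{t_i}\!\int_{t_i}^{t_{i+1}}[(-Y_s)^+ - (-Y_{t_i})^+]\,ds$. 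Since $g_i\le 0$, the multiplier $1 - \lambda h g_i \ge 1$, so taking $|\cdot|$ followed by unconditional expectation produces an \emph{amplification-free} recursion
\[
\mathbb{E}|e_i| \le \mathbb{E}|e_{i+1}| + \mathbb{E}|R^f_i| + \lambda\,\mathbb{E}|\tilde R^p_i|.
\]
This sign structure from the implicit penalty is what ultimately keeps the final constant independent of $\lambda$.

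Next I would bound the two residuals. For $R^f_i$, inserting $f_1(t_i,X_{t_i})$ and invoking Lemma \ref{Lem:1} (the $\tfrac12$-temporal H\"older estimate for $f_1$ along $X$ and the Euler bound $\mathbb{E}|X_{t_i}-X^{\pi}_{t_i}|\le L\sqrt h$) gives $\mathbb{E}|R^f_i|\le K h^{3/2}$, so $\sum_i \mathbb{E}|R^f_i| = O(\sqrt h)$. For $\tilde R^p_i$, the central observation is that the drift of $Y$ has polynomial growth with constants independent of $\lambda$: $|f_1(u,X_u) + \lambda(-Y_u)^+|\le K(1+|X_u|)$, because $\lambda(-Y_u)^+\le C$ by Proposition \ref{Prop:penaltyTerm} applied to $U=(V^\lambda-p)e^{-rt}$. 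Mollifying $(-\cdot)^+$ by $\phi_\varepsilon \in C^2$, applying It\^o to $\phi_\varepsilon(Y_u)$, taking $\mathbb{E}_{t_i}$ to kill $\int Z\,dW$, and letting $\varepsilon\to 0$ yields
\[
|\mathbb{E}_{t_i}[(-Y_s)^+ - (-Y_{t_i})^+]| \le K(s-t_i) + \tfrac12\mathbb{E}_{t_i}[L^0_s - L^0_{t_i}],
\]
where $L^0$ is the local time of $Y$ at zero. Integrating over $[t_i,t_{i+1}]$, the drift term contributes $O(h^2)$ per step, while the local-time increments telescope across $i$ so that $\sum_i \mathbb{E}[L^0_{t_{i+1}} - L^0_{t_i}]\cdot h = h\,\mathbb{E}[L^0_T]$. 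Hence $\lambda\sum_i \mathbb{E}|\tilde R^p_i| = O(\lambda h)$. Combined with the terminal error $\mathbb{E}|e_N|\le L\,\mathbb{E}|X_T - X^{\pi}_T|\le C\sqrt h$, iterating the recursion delivers $\sup_i \mathbb{E}|e_i| \le C_0(\sqrt h + \lambda h)$.

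The hardest step will be the Tanaka--Meyer argument and, in particular, showing that $\mathbb{E}[L^0_T(Y)]$ is bounded \emph{uniformly in} $\lambda$. This is where the uniform parabolicity \eqref{Parabolic} and the classical regularity of $V^\lambda$ should enter, by providing uniform bounds on the density of $Y_t$ near zero and on $|Z_t| = |\sigma^T\nabla_x U|$ via the occupation-times formula. Without this refinement, the naive Lipschitz bound $|(-Y_s)^+ - (-Y_{t_i})^+|\le|Y_s - Y_{t_i}| = O(\sqrt{s-t_i})$ would only deliver $O(\sqrt h + \lambda\sqrt h)$, which misses the promised $O(\sqrt h + \lambda h)$ rate; the whole point of the conditional-expectation reduction is to trade that $\sqrt{s-t_i}$ for $(s-t_i)$.
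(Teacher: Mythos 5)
Your proposal follows essentially the same route as the paper's proof: the same amplification-free recursion obtained by moving the implicit penalty difference to the left and using that the quotient of $y\mapsto(-y)^+$ is nonpositive, the same Lemma \ref{Lem:1} bounds for the terminal and driver errors, and the same Tanaka/local-time argument combined with Proposition \ref{Prop:penaltyTerm} to reduce the penalty residual to $O(\lambda h)$ with the local-time increments telescoping to $h\,\mathbb{E}[L^0(T)]$. The only difference is that you explicitly flag that $\mathbb{E}[L^0(T)]$ for $Y$ should be bounded uniformly in $\lambda$ if $C_0$ is to be $\lambda$-independent — a point the paper simply absorbs into its constant without comment — so your sketch is, if anything, slightly more careful on that step.
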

\begin{proof}
It follows from the BSDE (\ref{BSDE}) that
\begin{align*}
Y_{t_i} &=  \mathbb{E}_{t_i}[Y_{t_{i+1}}] +\mathbb{E}_{t_i}\left[\int^{t_{i+1}}_{t_i}f_1(s,X_s)ds +\lambda \int^{t_{i+1}}_{t_i}(-Y_{s})^+ds\right]\\&= \mathbb{E}_{t_i}[Y_{t_{i+1}}] + (f_1(t_i,X_{t_i}) + \lambda (-Y_{t_i})^+)h + e_i,
\end{align*}
where 
\begin{align*}
e_i = \mathbb{E}_{t_{i}}\left[\int^{t_{i+1}}_{t_i} (f_1(s,X_s)-f_1(t_i,X_{t_i}))ds+ \lambda \int ^{t_{i+1}}_{t_i}(     (-Y_{s})^+-(-Y_{t_i})^+)ds\right].
\end{align*}
For the ease of exposition, we define $\delta Y_{t_i}: = Y_{t_i} -Y^{\pi}_{t_i}, \delta f_{t_i} := f_1(t_i,X_{t_i}) -f_1(t_i,X^{\pi}_{t_i}).$ Then, we have that
\begin{align*}
\delta Y_{t_i} = \mathbb{E}_{t_i}[\delta Y_{t_{i+1}}]+ \delta f_{t_i}h + \lambda ( (-Y_{t_i})^+-(-Y^{\pi}_{t_i})^+)+ e_i.
\end{align*}
Moving $\lambda ( (-Y_{t_i})^+-(-Y^{\pi}_{t_i})^+)$ to the left side, we have that
\begin{align*}
(1-\eta_{t_i})\delta Y_{t_i} = \mathbb{E}_{t_i}[\delta Y_{t_{i+1}}]+ \delta f_{t_i}h + e_i,
\end{align*}
where $\eta_{t_i} = \lambda\frac{ (-Y_{t_i})^+-(-Y^{\pi}_{t_i})^+}{\delta Y_{t_i}}1_{\delta Y_{t_i}\neq 0}$.\par 
As $\eta_{t_i} \le 0$, then
\begin{align*}
\vert\delta Y_{t_i}\vert \le \mathbb{E}_{t_i}\vert\delta Y_{t_{i+1}}\vert+ \vert\delta f_{t_i}\vert h+ \vert e_i\vert
\end{align*}
Taking the expectation on the both sides, we have that
\begin{align*}
\mathbb{E}\vert\delta Y_{t_i}\vert \le \mathbb{E}\vert\delta Y_{t_{i+1}}\vert+ \mathbb{E}\vert\delta f_{t_i}\vert h +\mathbb{E} \vert e_i\vert,
\end{align*}
which gives rise to
\begin{align*}
\sum_{j=i}^{N-1}\mathbb{E}\vert\delta Y_{t_j}\vert \le \sum_{j=i}^{N-1}\mathbb{E}\vert\delta Y_{t_{j+1}}\vert+ h\sum_{j=i}^{N-1}\mathbb{E}\vert\delta f_{t_j}\vert +\sum_{j=i}^{N-1}\mathbb{E} \vert e_j\vert,
\end{align*}
which yields that
\begin{align}\label{Estimate}
\mathbb{E}\vert\delta Y_{t_i}\vert \le \mathbb{E}\vert\delta Y_{T}\vert+ h\sum_{j=i}^{N-1}\mathbb{E}\vert\delta f_{t_j}\vert +\sum_{j=i}^{N-1}\mathbb{E} \vert e_j\vert.
\end{align}
First, we estimate $ \mathbb{E}\vert\delta Y_{T}\vert.$ It is easy to see from Lemma \ref{Lem:1} that
\begin{align}\label{Estimate:terminal}
\mathbb{E}|\delta Y_T|  = \mathbb{E}|h_1(X_T) - h_1(X_T^{\pi})| \le K_1\sqrt{h}.
\end{align}
Second, regarding $h\sum_{j=i}^{N-1}\mathbb{E}\vert\delta f_{t_j}\vert$, following Lemma \ref{Lem:1}, we have that
\begin{align}\label{Estimate:running}
h\sum_{j=i}^{N-1}\mathbb{E}\vert\delta f_{t_j}\vert \le h\sum_{j=i}^{N-1}K_1\sqrt{h} \le h K_1\sqrt{h}N=  K_1 T\sqrt{h}. 
\end{align}
Finally, let us focus on $\sum_{j=i}^{N-1}\mathbb{E} \vert e_j\vert$. Note that
\begin{align}\label{Estimate:error}
\sum_{j=i}^{N-1}\mathbb{E} \vert e_j\vert&\le \sum_{i=0}^{N-1}\mathbb{E}\left[\left|\int^{t_{i+1}}_{t_i} (f_1(s,X_s)-f_1(t_i,X_{t_i}))ds\right|+ \lambda \left|\int ^{t_{i+1}}_{t_i}(     (-Y_{s})^+-(-Y_{t_i})^+)ds\right|\right].
\end{align}
Lemma \ref{Lem:1} yields that
\begin{align*}
\mathbb{E}\left|\mathbb{E}_{t_{i}}\left[\int^{t_{i+1}}_{t_i} (f_1(s,X_s)-f_1(t_i,X_{t_i}))ds\right]\right|&\le K_1\int^{t_{i+1}}_{t_i} \sqrt{s-t_i}ds\le K_1\sqrt{h}h,
\end{align*}
which gives that
\begin{align}\label{Estimate:runningerror}
\sum_{i=0}^{N-1}\mathbb{E}\left|\int^{t_{i+1}}_{t_i} (f_1(s,X_s)-f_1(t_i,X_{t_i}))ds\right|\le K_1T\sqrt{h}.
\end{align}
Using Ito-Tanaka's formula (\cite{karatzas2012brownian}), we have that
\begin{align*}
(-Y_s)^+ - (-Y_{t_i})^+ = \int^s_{t_i}-1_{Y_u\le 0}dY_u + \frac{1}{2}(L^0(s)-L^0(t_i)),
\end{align*}
where $L^0(\cdot)$ is the local time of $Y$ at $0$.\par 
Then we obtain that
\begin{align*}
 &\left|\mathbb{E}_{t_{i}}\left[ \int ^{t_{i+1}}_{t_i}( (-Y_{s})^+-(-Y_{t_i})^+)ds \right]\right |\\&\le \int ^{t_{i+1}}_{t_i} \left|\mathbb{E}_{t_{i}}\left[\int^s_{t_i}-1_{Y_u\le 0}dY_u + \frac{1}{2}(L^0(s)-L^0(t_i))\right]\right |ds\\&=\int ^{t_{i+1}}_{t_i} \left|\mathbb{E}_{t_{i}}\left[\int^s_{t_i}1_{Y_u\le 0}(f(u,X_u)+\lambda (-Y_u)^+)du + \frac{1}{2}(L^0(s)-L^0(t_i))\right]\right |ds\\&\le I_{1i}+I_{2i}+I_{3i},
\end{align*}
where $I_{1i} = \mathbb{E}_{t_{i}}\int ^{t_{i+1}}_{t_i} \int^s_{t_i}|f(u,X_u)|duds, I_{2i}=\mathbb{E}_{t_{i}}\int ^{t_{i+1}}_{t_i} \int^s_{t_i}\lambda (-Y_u)^+du ds, I_{3i} = \mathbb{E}_{t_{i}}\int ^{t_{i+1}}_{t_i} \frac{1}{2}(L^0(s)-L^0(t_i))ds.$\par 
Then (\ref{Estimate:error}) and (\ref{Estimate:runningerror}) yield that
\begin{align}\label{Estimate:error1}
\sum_{j=i}^{N-1}\mathbb{E} \vert e_j\vert\le K_1T\sqrt{h} + \lambda\left(\sum_{i=0}^{N-1}\mathbb{E}I_{1i}+  \sum_{i=0}^{N-1}\mathbb{E}I_{2i}+  \sum_{i=0}^{N-1}\mathbb{E}I_{3i}\right).
\end{align}
Moreover, it follows from Lemma \ref{Lem:1} that
\begin{align*}
 \lambda\sum_{i=0}^{N-1}\mathbb{E}I_{1i} \le\lambda N K_1 h^2 = K_1T\lambda h.
\end{align*}
Also, Proposition \ref{Prop:penaltyTerm} gives that
\begin{align*}
 \lambda\sum_{i=0}^{N-1}\mathbb{E}I_{2i}\le \lambda NCh^2 = \lambda CTh.
\end{align*}
And following the definition of $I_{3i}$, we easily obtain that
\begin{align*}
\lambda\sum_{i=0}^{N-1}\mathbb{E}I_{3i}\le \frac{1}{2}\lambda\sum_{i=0}^{N-1}\mathbb{E}\left[h(L^0(t_{i+1})-L^0({t_i}))\right] \le \frac{1}{2}\lambda h\mathbb{E} L^0(T).
\end{align*}
Plug the three above inequalities into (\ref{Estimate:error1}), we then obtain the estimate for  $\sum_{j=i}^{N-1}\mathbb{E} \vert e_j\vert$,
\begin{align}\label{Estimate:error2}
\sum_{j=i}^{N-1}\mathbb{E} \vert e_j\vert\le K_1T\sqrt{h} + (K_1T+CT+\frac{1}{2} \mathbb{E} L^0(T))\lambda h.
\end{align}
Plug (\ref{Estimate:terminal}), (\ref{Estimate:running}) and (\ref{Estimate:error2}) into (\ref{Estimate}), then we have the conclusion and complete the proof.
\end{proof}\par 
While the selections of time step sizes and the penalization parameters are independent processes in some commonly used numerical methods for solving the variational inequality\footnote{For example, \cite{howison2013effect} employ a combination of the penalty method and finite difference method to numerically address a free boundary PDE problem. Their findings indicate that the discretization error is bounded by $(c+\frac{C}{\sqrt{\lambda}})(\Delta t + \Delta x^2),$ where $c,C$ are positive constants, $\Delta t$ is the time step size  and $\Delta x$ is the space step size. Consequently, the selections of $\lambda, \Delta t, \Delta x$ are independent, and hence one can choose arbitrarily large $\lambda$ without concern for $\Delta t, \Delta x.$}, 
Theorem \ref{Prop:penaltyError} and Theorem \ref{Thm:discretization} illustrate that the discretization error would be influenced by the selection of the ratios between the penalization parameter $\lambda$ and the time step size $h$. Hence the results in this paper remind one to give more careful consideration to the selection of the penalization parameter when implementing the DPM.
\par 
Moreover, Theorems \ref{Prop:penaltyError} and Theorem \ref{Thm:discretization} suggest that the optimal error bound is $O(\sqrt{h})$ at most. Choosing $\lambda = \frac{1}{\sqrt{h}}$, we obtain the following result.\par 
\begin{corollary}\label{Cor:1}
 Suppose that Assumptions (\ref{Assp:1})-(\ref{Assp:3}) hold, then
\begin{align*}
\sup_{0\le i\le N}\mathbb{E}\vert e^{rt_i}Y^{\pi}_{t_i} +p(t_i, X_{t_i})- V(t_i,X_{t_i})\vert\le K\sqrt{h},
\end{align*}
where $K$ is a constant depending on $X_0,T.$
\end{corollary}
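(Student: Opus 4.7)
The plan is to unpack the definition of $Y^\pi$ back into the language of $V$ and $V^\lambda$, and then combine the two main prior estimates (Theorem \ref{Prop:penaltyError} on the penalization error and Theorem \ref{Thm:discretization} on the discretization error) via a triangle inequality. Recall the transformation $U(t,x) = (V^\lambda(t,x) - p(t,x))e^{-rt}$ introduced just before equation (\ref{MPPDE}), so that the continuous-time solution $Y_t = U(t,X_t)$ satisfies $e^{rt}Y_t + p(t,X_t) = V^\lambda(t,X_t)$. This identity is the bridge that lets the quantity in the corollary be decomposed in a useful way.

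First I would write
\begin{align*}
\bigl| e^{rt_i}Y^\pi_{t_i} + p(t_i,X_{t_i}) - V(t_i,X_{t_i})\bigr|
&\le e^{rt_i}\bigl|Y^\pi_{t_i} - Y_{t_i}\bigr| + \bigl|e^{rt_i}Y_{t_i} + p(t_i,X_{t_i}) - V(t_i,X_{t_i})\bigr| \\
&= e^{rt_i}\bigl|Y^\pi_{t_i} - Y_{t_i}\bigr| + \bigl|V^\lambda(t_i,X_{t_i}) - V(t_i,X_{t_i})\bigr|.
\end{align*}
Taking expectation and using $t_i \le T$, the first term is controlled by Theorem \ref{Thm:discretization}, giving a bound $e^{rT} C_0(\sqrt{h} + \lambda h)$. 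The second term is controlled pathwise by Theorem \ref{Prop:penaltyError}, which yields $0 \le V(t_i,X_{t_i}) - V^\lambda(t_i,X_{t_i}) \le C/\lambda$ almost surely, hence the same bound in expectation.

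Finally I would substitute $\lambda = 1/\sqrt{h}$ so that both $\lambda h$ and $1/\lambda$ collapse to $\sqrt{h}$, producing
\begin{align*}
\mathbb{E}\bigl| e^{rt_i}Y^\pi_{t_i} + p(t_i,X_{t_i}) - V(t_i,X_{t_i})\bigr|
\le e^{rT} C_0\bigl(\sqrt{h} + \sqrt{h}\bigr) + C\sqrt{h} = K\sqrt{h},
\end{align*}
with $K := 2 e^{rT} C_0 + C$, which depends only on $T$ and $X_0$ through the constants $C_0$ and $C$. Taking the supremum over $0 \le i \le N$ on the left-hand side then completes the proof.

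Since both prerequisite estimates are already established, the corollary is essentially a bookkeeping exercise, and I do not anticipate a real obstacle; the only point that merits a moment of care is keeping track of the factor $e^{rt_i}$ introduced by undoing the transformation $U = (V^\lambda - p)e^{-rt}$, so that one correctly attributes the bounded constant $e^{rT}$ to the absorbed constant $K$ rather than leaving any implicit dependence on the horizon unresolved.
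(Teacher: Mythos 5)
Your proof is correct and follows the same route as the paper: the identity $e^{rt_i}Y_{t_i}+p(t_i,X_{t_i})=V^{\lambda}(t_i,X_{t_i})$, a triangle inequality splitting the error into the discretization part (Theorem \ref{Thm:discretization}) and the penalization part (Theorem \ref{Prop:penaltyError}), and finally the choice $\lambda=1/\sqrt{h}$. Your write-up is if anything slightly more careful about tracking the factor $e^{rt_i}\le e^{rT}$ and the explicit form of the constant $K$.
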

\begin{proof}
Note that $V^{\lambda}(t_i,X_{t_i}) =e^{rt_i}Y_{t_i} +p(t_i,X_{t_i})$, then we have that
\begin{align*}
\mathbb{E}\vert e^{rt_i}Y^{\pi}_{t_i} +p(t_i, X_{t_i})- V(t_i,X_{t_i})|\le |V(t_i,X_{t_i})-V^{\lambda}(t_i,X_{t_i})|+\mathbb{E}\vert Y^{\pi}_{t_i}  - Y_{t_i} |e^{rt_i}.
\end{align*}
It follows from Theorem \ref{Prop:penaltyError} and Theorem \ref{Thm:discretization} that
\begin{align*}
\mathbb{E}\vert e^{rt_i}Y^{\pi}_{t_i} +p(t_i, X_{t_i})- V(t_i,X_{t_i})| \le \frac{C}{\lambda}+ C_0 (\sqrt{h}+\lambda h)e^{rt_i}.
\end{align*}
 Then let $\lambda=\frac{1}{\sqrt{h}}$, we obtain the conclusion and complete the proof.
\end{proof}
\subsection{Error bounds for the DPM}
In this section, we make analysis of the error bound for the DPM. We start with the following lemma, which shows that the difference of two discretization schemes at each step can be bounded by the difference at the terminal time.
\begin{lemma}\label{Lem:3}
Suppose that Assumptions  (\ref{Assp:1})-(\ref{Assp:3}) hold. Assume that $(\{Y^{\pi,1}_{t_i}\}_{0\le i\le N},\{Z^{\pi,1}_{t_i}\}_{0\le i\le N-1})$ and $(\{Y^{\pi,2}_{t_i}\}_{0\le i\le N},\{Z^{\pi,2}_{t_i}\}_{0\le i\le N-1})$ are discretization schemes that follow (\ref{DDBSDE}), then
\begin{align*}
\sup_{0\le i\le N}\mathbb{E}|\delta Y^{\pi}_{t_i}|\le \mathbb{E}|\delta Y^{\pi}_{t_N}|,
\end{align*}
where $\delta Y^{\pi}_{t_i}:= Y^{\pi,1}_{t_i}-Y^{\pi,2}_{t_i}, 0\le i\le N$.
\end{lemma}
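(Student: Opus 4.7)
The plan is to exploit the fact that the implicit penalty update in (\ref{DDBSDE}) can be rewritten as the application of a strictly monotone, non-contracting map whose inverse is therefore $L^1$-contractive; stability of the scheme then follows from a one-step $L^1$ estimate plus a backward induction from the terminal step. I would proceed in three short movements.

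First, take $\mathbb{E}_{t_i}[\cdot]$ on both sides of (\ref{DDBSDE}) for each $k\in\{1,2\}$. Since $Z^{\pi,k}_{t_i}$ is $\mathcal{F}_{t_i}$-measurable by its defining formula, the Itô increment has vanishing conditional mean, and the recursion collapses to
\[
\psi(Y^{\pi,k}_{t_i}) \;=\; \mathbb{E}_{t_i}[Y^{\pi,k}_{t_{i+1}}] + f_1(t_i, X^{\pi}_{t_i})\,h, \qquad k=1,2,
\]
where I set $\psi(y) := y - \lambda h\,(-y)^+$. Both schemes are driven by the same Euler path $X^{\pi}$, so the driver term cancels on subtraction and one obtains
\[
\psi(Y^{\pi,1}_{t_i}) - \psi(Y^{\pi,2}_{t_i}) \;=\; \mathbb{E}_{t_i}[\delta Y^{\pi}_{t_{i+1}}].
\]

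Second, I would verify the expansiveness of $\psi$. By direct inspection $\psi(y)=y$ on $\{y\ge 0\}$ and $\psi(y)=(1+\lambda h)y$ on $\{y<0\}$, so $\psi$ is strictly increasing with slope at least $1$ everywhere. A brief case analysis on the signs of $y_1,y_2$ then yields the key bound $|\psi(y_1)-\psi(y_2)|\ge |y_1-y_2|$ for all $y_1,y_2\in\mathbb{R}$. Applied at $(y_1,y_2)=(Y^{\pi,1}_{t_i},Y^{\pi,2}_{t_i})$ and combined with the identity above,
\[
|\delta Y^{\pi}_{t_i}| \;\le\; \bigl|\mathbb{E}_{t_i}[\delta Y^{\pi}_{t_{i+1}}]\bigr|.
\]
Jensen's inequality on the conditional expectation together with the tower property upgrades this to $\mathbb{E}|\delta Y^{\pi}_{t_i}|\le \mathbb{E}|\delta Y^{\pi}_{t_{i+1}}|$, and a backward induction from $i=N-1$ down to $i=0$ delivers $\sup_{0\le i\le N}\mathbb{E}|\delta Y^{\pi}_{t_i}|\le \mathbb{E}|\delta Y^{\pi}_{t_N}|$, which is the claim.

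The one point that must not be misidentified is the \emph{direction} of the inequality in the expansiveness step: it is precisely because the penalty term $\lambda(-Y^{\pi}_{t_i})^+ h$ is treated implicitly in (\ref{DDBSDE}) that $\psi$ expands distances and $\psi^{-1}$ contracts them, producing a stability constant of $1$ that is independent of $\lambda$. An explicit treatment would instead introduce a factor $(1+\lambda h)$ at each backward step and hence an $e^{\lambda T}$ blow-up in the terminal-to-interior bound, which would render the lemma useless in the subsequent error analysis where one ultimately wants to send $\lambda\to\infty$ (and, by Corollary \ref{Cor:1}, to couple $\lambda=1/\sqrt{h}$).
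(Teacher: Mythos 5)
Your proposal is correct and is essentially the paper's own argument in different clothing: your expansiveness bound $|\psi(y_1)-\psi(y_2)|\ge|y_1-y_2|$ for $\psi(y)=y-\lambda h(-y)^+$ is exactly the paper's linearization step, since $\psi(Y^{\pi,1}_{t_i})-\psi(Y^{\pi,2}_{t_i})=(1-\eta^{\pi}_{t_i}h)\,\delta Y^{\pi}_{t_i}$ with the paper's $\eta^{\pi}_{t_i}\le 0$. The remaining steps (conditional expectation kills the martingale increment, Jensen plus the tower property, backward induction) coincide with the paper's proof, and your closing remark about the implicit treatment yielding a $\lambda$-independent stability constant is a correct and worthwhile observation.
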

\begin{proof}
It follows from (\ref{DDBSDE}) that 
\begin{align*}
Y^{\pi,k}_{t_i} = \mathbb{E}_{t_i}[Y^{\pi,k}_{t_{i+1}}] + (f_1(t_i,X^{\pi}_{t_i}) + \lambda (-Y^{\pi,k}_{t_i})^+)h,\quad \forall i = 0,1,2,\dots,N-1, k=1,2,
\end{align*}
which yields that
\begin{align*}
\delta Y^{\pi}_{t_i} = \mathbb{E}_{t_i}[\delta Y^{\pi}_{t_{i+1}}] +( \lambda (-Y^{\pi,1}_{t_i})^+- \lambda (-Y^{\pi,2}_{t_i})^+)h = \mathbb{E}_{t_i}[\delta Y^{\pi}_{t_{i+1}}] +\eta^{\pi}_{t_i} h \delta Y^{\pi}_{t_i},
\end{align*}
where  $\eta^{\pi}_{t_i} = \lambda\frac{ (-Y^{\pi,1}_{t_i})^+-(-Y^{\pi,2}_{t_i})^+}{\delta Y^{\pi}_{t_i}}1_{\delta Y^{\pi}_{t_i}\neq 0}$.\par 
Then we have that
\begin{align*}
    (1-\eta^{\pi}_{t_i} h )\delta Y^{\pi}_{t_i}  = \mathbb{E}_{t_i}[\delta Y^{\pi}_{t_{i+1}}] 
\end{align*}
As $\eta^{\pi}_{t_i} \le 0,$ we then have that
\begin{align*}
     |\delta Y^{\pi}_{t_i}| \le (1-\eta^{\pi}_{t_i} h )|\delta Y^{\pi}_{t_i}|  \le \mathbb{E}_{t_i}|\delta Y^{\pi}_{t_{i+1}}|, 
\end{align*}
which yields
\begin{align*}
\mathbb{E}|\delta Y^{\pi}_{t_i}| \le \mathbb{E}|\delta Y^{\pi}_{t_{i+1}}|, 0\le i\le N-1.
\end{align*}
This completes the proof.
\end{proof}
\begin{theorem}\label{Thm:errorBounds}
 Suppose that Assumptions (\ref{Assp:1})-(\ref{Assp:3}) hold, then
 \begin{align*}
\sup_{0\le i\le N}\mathbb{E}\vert \mathcal{V}^{\pi,\theta}_{t_{i}} - Y_{t_i}\vert\le C_1 (\sqrt{h} + \lambda h)+  \mathbb{E}\vert \mathcal{V}^{\theta}_{t_N}-h_1(X_{t_N})\vert,
 \end{align*}
 where $C_1$ is a constant depending on $T,X_0.$\par 
\end{theorem}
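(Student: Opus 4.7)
The plan is to reduce the theorem to a combination of Theorem \ref{Thm:discretization} (which controls the discretization error of the penalized BSDE) and Lemma \ref{Lem:3} (which controls the propagation of terminal error through the discrete scheme). First, by the triangle inequality I would split
\begin{align*}
\mathbb{E}|\mathcal{V}^{\pi,\theta}_{t_i} - Y_{t_i}| \le \mathbb{E}|\mathcal{V}^{\pi,\theta}_{t_i} - Y^{\pi}_{t_i}| + \mathbb{E}|Y^{\pi}_{t_i} - Y_{t_i}|,
\end{align*}
where $Y^{\pi}_{t_i}$ is the (theoretical) solution of the implicit scheme (\ref{DDBSDE}) started from the true $Y_0$. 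Taking the supremum over $i$, the second term is immediately bounded by $C_0(\sqrt{h}+\lambda h)$ via Theorem \ref{Thm:discretization}.

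Second, I would argue that the pair $(\mathcal{V}^{\pi,\theta}, \mathcal{Z}(\cdot|\theta))$ appearing in (\ref{NetworkBSDE}) is itself an instance of the discretization (\ref{DDBSDE}), since the only structural difference is the choice of the $Z$-component, and Lemma \ref{Lem:3} makes no use of the specific form of $Z^{\pi,k}$ beyond its $\mathcal{F}_{t_i}$-measurability. Concretely, taking conditional expectations on both (\ref{NetworkBSDE}) and (\ref{DDBSDE}) eliminates the stochastic integral term, leaving in both cases the backward recursion
\begin{align*}
Y^{\pi,k}_{t_i} = \mathbb{E}_{t_i}[Y^{\pi,k}_{t_{i+1}}] + \bigl(f_1(t_i,X^{\pi}_{t_i}) + \lambda(-Y^{\pi,k}_{t_i})^+\bigr)h,
\end{align*}
which is exactly the structure exploited in the proof of Lemma \ref{Lem:3}. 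Hence applying Lemma \ref{Lem:3} with $Y^{\pi,1} = \mathcal{V}^{\pi,\theta}$ and $Y^{\pi,2} = Y^{\pi}$ yields
\begin{align*}
\sup_{0\le i\le N}\mathbb{E}|\mathcal{V}^{\pi,\theta}_{t_i} - Y^{\pi}_{t_i}| \le \mathbb{E}|\mathcal{V}^{\pi,\theta}_{t_N} - Y^{\pi}_{t_N}| = \mathbb{E}|\mathcal{V}^{\theta}_{t_N} - h_1(X^{\pi}_{t_N})|,
\end{align*}
using the terminal condition $Y^{\pi}_{t_N} = h_1(X^{\pi}_{t_N})$. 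The right-hand side is the cost function $\iota(v,\theta)$. Combining the two bounds gives the stated estimate with $C_1 = C_0$ (possibly up to a lower order correction to reconcile $h_1(X^{\pi}_{t_N})$ with $h_1(X_{t_N})$ by Lemma \ref{Lem:1}, which can be absorbed into the $\sqrt{h}$ term).

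The main obstacle, and the reason Lemma \ref{Lem:3} was isolated in the previous subsection, is justifying that the backward stability argument survives the replacement of the ``conditional expectation'' $Z^{\pi}_{t_i}$ by the (generally suboptimal) neural-network surrogate $\mathcal{Z}(t_i,X^{\pi}_{t_i}|\theta)$. The key observation is that the monotonicity argument used in Lemma \ref{Lem:3} depends only on the sign of the coefficient $\eta^{\pi}_{t_i}\le 0$ coming from the penalty term, not on the $Z$-component, so no extra $L^2$-type control of $\mathcal{Z}-Z^{\pi}$ enters. This is precisely why the $L^1$ cost function $\iota(v,\theta)$ (rather than an MSE) is natural in this analysis, as noted in the discussion preceding the theorem.
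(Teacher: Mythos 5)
Your proposal is correct and follows essentially the same route as the paper's (very terse) proof: a triangle inequality through the implicit-scheme solution $Y^{\pi}$, with Theorem \ref{Thm:discretization} controlling $\mathbb{E}|Y^{\pi}_{t_i}-Y_{t_i}|$ and Lemma \ref{Lem:3} propagating the terminal cost backward. You in fact supply more detail than the paper does, correctly noting that taking $\mathbb{E}_{t_i}$ eliminates the $Z$-term so that the network scheme (\ref{NetworkBSDE}) satisfies the same backward recursion as (\ref{DDBSDE}), and that the discrepancy between $h_1(X^{\pi}_{t_N})$ and $h_1(X_{t_N})$ is absorbed via Lemma \ref{Lem:1}.
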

\begin{proof}
Note that
\begin{align*}
\mathbb{E}\vert \mathcal{V}^{\pi,\theta}_{t_{i}} - Y_{t_i}\vert\le \mathbb{E}\vert \mathcal{V}^{\pi,\theta}_{t_{i+1}} - Y^{\pi}_{t_i}\vert+\mathbb{E}\vert Y^{\pi}_{t_i}- Y_{t_i}\vert,
\end{align*}
then the theorem follows from Lemma \ref{Lem:3} and Theorem \ref{Thm:discretization}.
\end{proof}
\begin{corollary}\label{cor:errorBounds}
 Suppose that Assumptions (\ref{Assp:1})-(\ref{Assp:3}) hold, then
 \begin{align*}
\sup_{0\le i\le N}\mathbb{E}\vert \mathcal{V}^{\pi,\theta}_{t_{i}}e^{rt_i}+p(t_i,X_{t_i}) - V({t_i},X_{t_i})\vert\le C_2 (\sqrt{h}+\lambda h + \frac{1}{\lambda}) +  e^{rT}\mathbb{E}\vert \mathcal{V}^{\theta}_{t_N}-h_1(X_{t_N})\vert,
 \end{align*}
 where $C_2$ is a constant depending on $T,X_0.$
\end{corollary}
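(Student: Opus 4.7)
The plan is to reduce the corollary to Theorem \ref{Thm:errorBounds} (the BSDE-level error bound) plus Theorem \ref{Prop:penaltyError} (the penalization bound) via the change of variables that was fixed at the start of Section \ref{sec:DPM}. Recall from the definition $U(t,x) = (V^{\lambda}(t,x)-p(t,x))e^{-rt}$, with $Y_t = U(t,X_t)$, we have the identity
\[
V^{\lambda}(t_i,X_{t_i}) = e^{rt_i}Y_{t_i} + p(t_i,X_{t_i}).
\]
This identity is what allows us to move back and forth between the transformed BSDE unknowns $Y, \mathcal{V}^{\pi,\theta}$ and the original value function $V$.

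First I would insert $V^{\lambda}(t_i,X_{t_i})$ as a pivot and apply the triangle inequality:
\[
\mathbb{E}\bigl|\mathcal{V}^{\pi,\theta}_{t_i}e^{rt_i}+p(t_i,X_{t_i})-V(t_i,X_{t_i})\bigr|
\le e^{rt_i}\mathbb{E}\bigl|\mathcal{V}^{\pi,\theta}_{t_i}-Y_{t_i}\bigr|+\mathbb{E}\bigl|V^{\lambda}(t_i,X_{t_i})-V(t_i,X_{t_i})\bigr|.
\]
For the first term, apply Theorem \ref{Thm:errorBounds} and pull out $e^{rt_i}\le e^{rT}$, which yields the bound $e^{rT}\bigl[C_1(\sqrt{h}+\lambda h)+\mathbb{E}|\mathcal{V}^{\theta}_{t_N}-h_1(X_{t_N})|\bigr]$. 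For the second term, Theorem \ref{Prop:penaltyError} gives the pointwise (hence $L^1$) estimate $C/\lambda$. Taking the supremum over $0\le i\le N$ and lumping $e^{rT}C_1$ and $C$ into a single constant $C_2$ depending only on $T$ and $X_0$ delivers the stated inequality.

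There is no real obstacle: the only subtlety is making sure the $e^{rt_i}$ factor is handled consistently on both sides of the triangle inequality and that the penalization error bound from Theorem \ref{Prop:penaltyError}, which is stated pointwise in $(t,x)$, legitimately transfers to an expectation bound when $(t,x)$ is replaced by $(t_i,X_{t_i})$ — this is immediate because $C/\lambda$ is a deterministic constant. The neural-network-dependent term $\mathbb{E}|\mathcal{V}^{\theta}_{t_N}-h_1(X_{t_N})|$ is simply carried through from Theorem \ref{Thm:errorBounds} without alteration, which is precisely why the corollary states the error as the sum of the discretization/penalization pieces plus the $L^1$ training loss at maturity.
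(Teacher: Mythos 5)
Your proposal is correct and follows essentially the same route as the paper: insert $V^{\lambda}(t_i,X_{t_i})$ as a pivot, use the identity $V^{\lambda}(t_i,X_{t_i})=e^{rt_i}Y_{t_i}+p(t_i,X_{t_i})$ to rewrite the first term as $e^{rt_i}\mathbb{E}\vert\mathcal{V}^{\pi,\theta}_{t_i}-Y_{t_i}\vert$, then apply Theorem \ref{Thm:errorBounds} and Theorem \ref{Prop:penaltyError}. Your added remarks on bounding $e^{rt_i}$ by $e^{rT}$ and on the pointwise-to-expectation transfer of the $C/\lambda$ bound are accurate details the paper leaves implicit.
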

\begin{proof}
Note that $Y_{t_i} = U(t_i,X_{t_i}) = e^{-rt}(V^{\lambda}(t_i,X_{t_i})-p(t_i,X_{t_i})),$ and 
\begin{align*}
\mathbb{E}\vert \mathcal{V}^{\pi,\theta}_{t_{i}}e^{rt_i}+p(t_i,X_{t_i}) - V({t_i},X_{t_i})\vert&\le \mathbb{E}\vert \mathcal{V}^{\pi,\theta}_{t_{i}}e^{rt_i}+p(t_i,X_{t_i}) - V^{\lambda}({t_i},X_{t_i})\vert+\mathbb{E}\vert  V^{\lambda}({t_i},X_{t_i})- V({t_i},X_{t_i})\vert\\&=e^{rt_i}\mathbb{E}\vert \mathcal{V}^{\pi,\theta}_{t_{i}} - Y_{t_i}\vert+\mathbb{E}\vert  V^{\lambda}({t_i},X_{t_i})- V({t_i},X_{t_i})\vert,
\end{align*}
 then the corollary is an immediate consequence of Theorem \ref{Thm:errorBounds} and Proposition \ref{Prop:penaltyError}.
\end{proof}
\begin{corollary}\label{corollary:3}
Suppose that Assumptions (\ref{Assp:1})-(\ref{Assp:3}) hold, then
 \begin{align*}
\sup_{0\le i\le N}\mathbb{E}\vert \mathcal{V}^{\pi,\theta}_{t_{i}}e^{rt_i}+p(t_i,X_{t_i}) - V({t_i},X_{t_i})\vert\le C_3 \sqrt{h} +  e^{rT}\mathbb{E}\vert \mathcal{V}^{\theta}_{t_N}-h_1(X_{t_N})\vert,
 \end{align*}
 where $C_3$ is a constant depending on $T,X_0.$\par 
\end{corollary}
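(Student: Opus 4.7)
The plan is to obtain this corollary as an immediate specialization of Corollary \ref{cor:errorBounds} by optimizing over the penalty parameter $\lambda$. Recall that Corollary \ref{cor:errorBounds} gives the bound
\[
\sup_{0\le i\le N}\mathbb{E}\vert \mathcal{V}^{\pi,\theta}_{t_{i}}e^{rt_i}+p(t_i,X_{t_i}) - V({t_i},X_{t_i})\vert\le C_2 \left(\sqrt{h}+\lambda h + \tfrac{1}{\lambda}\right) +  e^{rT}\mathbb{E}\vert \mathcal{V}^{\theta}_{t_N}-h_1(X_{t_N})\vert,
\]
where the term $\sqrt{h}$ comes from Euler discretization of the forward SDE and the running integral, $\lambda h$ comes from the discretization of the penalty term inside the BSDE, and $1/\lambda$ comes from the penalization error between $V$ and $V^{\lambda}$ established in Theorem \ref{Prop:penaltyError}.

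The key observation is that, viewed as a function of $\lambda$, the sum $\lambda h + \tfrac{1}{\lambda}$ is minimized at $\lambda = \tfrac{1}{\sqrt{h}}$, which exactly balances the ``discretization'' term against the ``penalization'' term. First I would substitute $\lambda = \tfrac{1}{\sqrt{h}}$ into the bound of Corollary \ref{cor:errorBounds}. A direct computation gives $\lambda h = \sqrt{h}$ and $\tfrac{1}{\lambda} = \sqrt{h}$, so that
\[
C_2\left(\sqrt{h} + \lambda h + \tfrac{1}{\lambda}\right) = 3 C_2 \sqrt{h}.
\]
Setting $C_3 := 3C_2$, the desired inequality follows directly.

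Because this corollary is purely an algebraic consequence of Corollary \ref{cor:errorBounds} under a specific choice of $\lambda$, there is no real obstacle. The only thing to verify is that nothing in Corollary \ref{cor:errorBounds} prevents $\lambda$ from depending on $h$, which is immediate since all constants there were derived independently of the relation between $\lambda$ and $h$. The interpretive significance (rather than mathematical difficulty) is that this choice of $\lambda$ reveals the overall order $O(\sqrt{h})$ of the DPM discretization error, matching the known convergence rate for numerical schemes of reflected BSDEs, plus an irreducible optimization residual captured by $\mathbb{E}\vert \mathcal{V}^{\theta}_{t_N}-h_1(X_{t_N})\vert$.
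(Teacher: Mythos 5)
Your proposal is correct and is essentially identical to the paper's proof, which likewise obtains the result by setting $\lambda = \frac{1}{\sqrt{h}}$ in Corollary \ref{cor:errorBounds} so that $\lambda h = \frac{1}{\lambda} = \sqrt{h}$. Your additional remark that the constants in Corollary \ref{cor:errorBounds} do not depend on the relation between $\lambda$ and $h$ is a valid (and worthwhile) sanity check, but the argument is otherwise the same.
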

\begin{proof}
Let $\lambda = \frac{1}{\sqrt{h}},$ then the corollary is an immediate result of Corollary \ref{cor:errorBounds}.
\end{proof}
\section{Numerical Examples}\label{sec:numerical}
\subsection{The model}
In this section, we test our algorithm on an American index put option, where the index is given by the geometric average of the underlying assets. Specifically, we suppose $h(x) = (K-(\prod_{i=1}^d x_i)^{\frac{1}{d}})^+, p(t,x) =  K-(\prod_{i=1}^d x_i)^{\frac{1}{d}}, f(t,x) = 0,$ where $K$ is a positive constant and $d$ is the number of the underlying assets. Moreover, we suppose the dynamics of the underlying assets follows a multi-dimensional Geometric Brownian motion, 
\begin{align*}
dX_{it} = \mu X_{it}dt+\sigma X_{it}dW_{it}, i = 1,2,\dots,d,
\end{align*}
where $\mu, \sigma$ are constants and $W= (W_1,W_2,\ldots,W_d)^T$  is a $d-$dimensional Brownian motion.\par 
It is easy to see that the pricing of the American option considered in this section can be simplified to the case of standard one-dimensional American put options. In more details, let $I_t = \prod_{i=1}^d X_{it}$, then the payoff functions $h(x) = (K-x)^+, p(t,x) =  K-x, f(t,x) = 0,$ and the dynamics of $I_t$ is given by
\begin{align*}
dI_t = \hat{\mu}I_tdt + \hat{\sigma}I_tdB_t,
\end{align*}
where $\hat{\mu} = \mu - \frac{1}{2}\sigma^2 + \frac{1}{2d}\sigma^2$, $\hat{\sigma} = \frac{\sigma}{\sqrt{d}}$ and $B=\{B_t\}_{t\ge0}$ is a one-dimensional Brownian motion.\par 
Thus, by the above transformation, the multi-dimensional optimal stopping problem is reduced to a standard one-dimensional American option pricing, where the underlying asset is given by $I_t.$ This observation allows us to use the finite difference method to obtain the benchmark solution to the optimal stopping problem. \par 
The code for the numerical tests in this section is available at: \\
\url{https://github.com/weiw1422/DeepPenaltyMethod/blob/main/DPM_convergence.ipynb}.
\subsection{Architectural specification and optimization of \texorpdfstring{$\mathcal{Z}(t,x\mid\theta)$}{Z(t,x|theta)}}
We employ the ResNet architecture, pioneered by \cite{he2016deep}, to parameterize the global spatio-temporal network $\mathcal{Z}(t, X \mid \theta)$ for the optimization problem in (\ref{Pr:ControNetwork}). Specifically, the network begins with an initial projection layer that maps the $(d+1)$-dimensional input—comprising both time and state variables—into a hidden feature space of dimension $h_f$. This is followed by $L$ residual blocks. Each block executes the transformation:
\begin{equation}
y = \text{LayerNorm}(x + 0.5 \cdot \mathcal{F}(x, \theta_l))
\end{equation}
where $\mathcal{F}$ denotes a sub-network consisting of two fully connected layers interleaved with the SiLU (Swish) activation function. By incorporating a scaling factor of $0.5$ and Layer Normalization, this architecture ensures robust gradient propagation and prevents numerical instability. The specific architectural configuration is detailed in Table \ref{tab:network_specs}.
\begin{table}[H]
\centering
\begin{tabular}{lcl}
\toprule
\textbf{Parameter} &  \textbf{Value} \\
\midrule
Hidden Units &  128  \\
Residual Blocks &  8 \\
Activation Function  & SiLU (Swish) \\
Normalization & Layer Normalization \\
Weight Initialization &  Xavier Uniform \\
\bottomrule
\end{tabular}
\caption{ResNet hyperparameters}
\label{tab:network_specs}
\end{table}
To ensure stable convergence of the global spatio-temporal network, we employ the Adam optimizer with an initial learning rate of $1 \times 10^{-3}$. To navigate the loss landscape, we implement a ``ReduceLROnPlateau'' scheduler. The scheduler monitors the validation loss and reduces the learning rate by a factor of $0.5$ if the loss remains stagnant for a patience period of $1,000$ iterations. The minimum learning rate is $1 \times 10^{-7}$.
\subsection{Convergence analysis}
Table \ref{TableBenchmark} presents a comparative analysis between the DPM approximations and the benchmarks obtained via the finite difference method. These results provide robust numerical evidence of the efficiency, accuracy, and scalability of the DPM across a wide range of dimensions, reaching up to $d=200$. The data reveals that the DPM maintains a relative error significantly below $1\%$ across all tested cases. Furthermore, the consistently low loss variance ($O(10^{-8})-O(10^{-7})$) across all high-dimensional tests underscores the stability of the DPM during the final stages of optimization. \par 
Figure \ref{fig:loss_convergence_grid} shows the training progression. The rapid decay and subsequent stabilization of the cost function across all dimensions demonstrate the robust learning capability of the DPM.
\begin{table}[H]
\centering
\begin{tabular}{l *{2}{S[table-format=1.4]} S[table-format=1.4e-1] S[table-format=1.4] S[table-format=1.4] S[table-format=2.2]}
\toprule
$d$ & {$\bar{V}$} & {$V_{b}$} & {Loss Variance} & {Rel. Error (\%)} & {Cost Value} \\ 
\midrule
10  & 1.4978 & 1.4958 & 3.1717e-8 & 0.1306 & 0.0087 \\
20   & 1.5188 & 1.5155 & 1.9505e-8 & 0.2192 & 0.0127\\
25  & 1.5228 & 1.5194 & 2.2140e-8 & 0.2251 & 0.0132  \\
50  & 1.5322 & 1.5270 & 3.4896e-8 & 0.3409 & 0.0178  \\
100 & 1.5342 & 1.5307 & 4.8503e-8 & 0.2313 & 0.0141  \\ 
200 & 1.5374 & 1.5326 & 1.0425e-7 & 0.3150 & 0.0172 \\
\bottomrule
\end{tabular}
\vspace{0.3cm}
\caption{The approximating solution to the variational inequality (\ref{FBPDE}).  \small\textit{$\bar{V}$ is the result of the last epoch used to approximate $V(0,1,1,\cdots,1)$. $V_b$ is the benchmark obtained with the finite difference method. The loss variance is the variance of a list of values of the cost functions of the last $1000$ epochs. The relative error is calculated by $\frac{|\bar{V}-V_{b}|}{V_b}$. The cost value is the value of the cost function for the last epoch.
Numerical experiments were performed on an NVIDIA G4 GPU with the following parameters. Problem parameters:  $r = \mu = 0.05$, $\sigma = \sqrt{2}$, $K = 2$, $T = 1$. Penalty and discretization: $N = 99$ time steps, penalty $\lambda = \frac{1}{\sqrt{T/N}}=9.95$. Training:  $30,000$ epochs, batch size $8,192$ (single-batch training per epoch).}}
\label{TableBenchmark}
\end{table}

\begin{figure}[htbp]
    \centering
    \includegraphics[width=\textwidth]{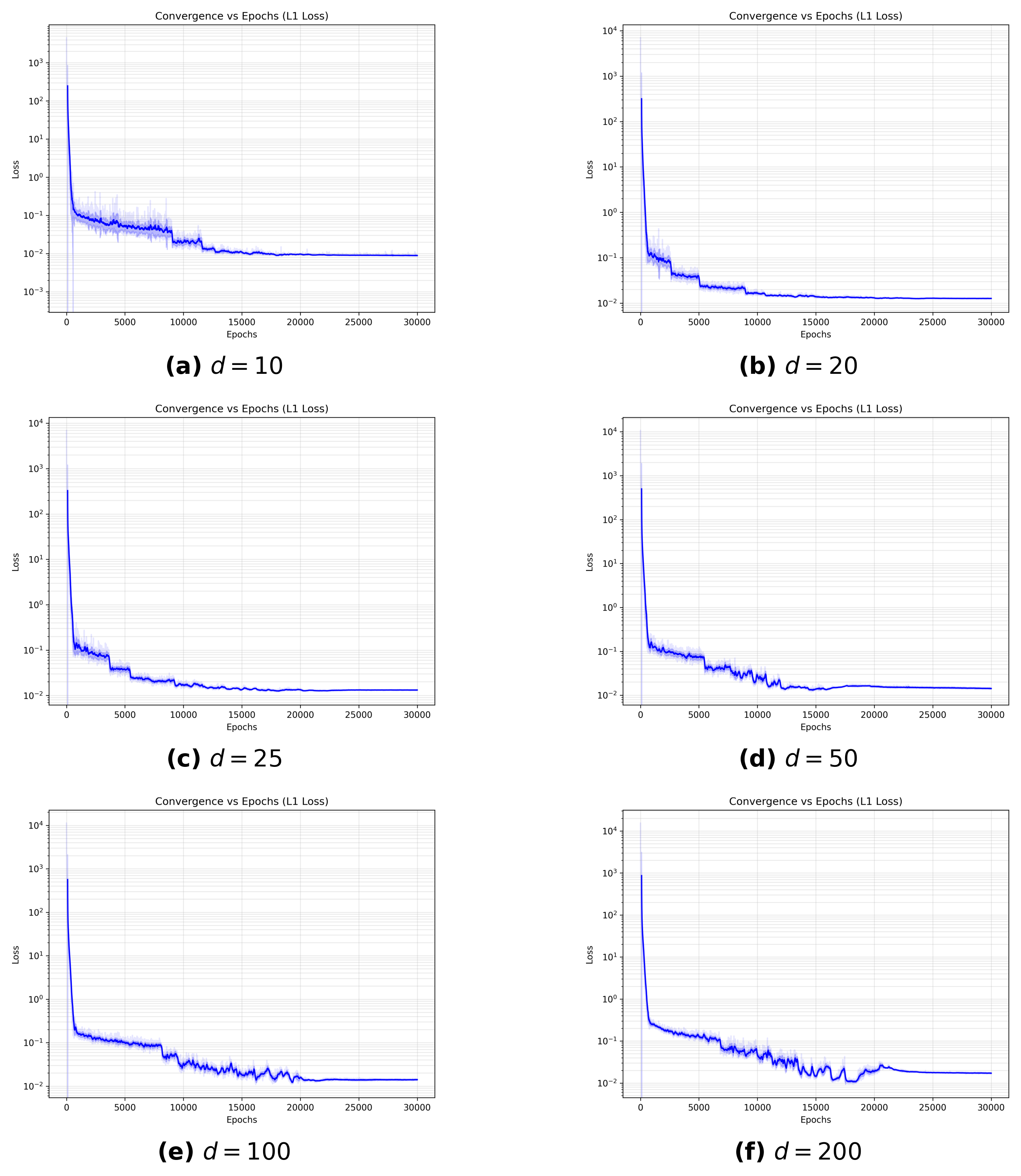}
    \caption{Training loss convergence across dimensions $d \in \{10, 20, 25, 50, 100, 200\}$. \small\textit{Each panel depicts the $L^1$ loss trajectory over 30,000 epochs. Numerical experiments were performed on an NVIDIA G4 GPU with the following parameters. Problem parameters:  $r = \mu = 0.05$, $\sigma = \sqrt{2}$, $K = 2$, $T = 1$. Penalty and discretization: $N = 99$ time steps, penalty $\lambda = \frac{1}{\sqrt{T/N}}=9.95$. Training:  $30,000$ epochs, batch size $8,192$ (single-batch training per epoch).}}
    \label{fig:loss_convergence_grid}
\end{figure}

\subsection{Computational efficiency}
The results reported in Table~\ref{tab:convergence_comparison} provide numerical evidence for the computational efficiency and scalability of the DPM in high-dimensional settings.\par 
The total training time exhibits only mild dependence on the dimension $d$, increasing from 21.29 minutes at $d = 10$ to 29.58 minutes at $d = 200$. The performance of the training time to increasing dimensionality is a consequence of non-recursive structure of the DPM which allows us to collapse the temporal dimension and batch dimension into a single composite input space. This weak scaling suggests that the method effectively exploits parallel hardware and that the overall computational cost remains consistent with quadratic or linear complexity in $d$\footnote{For the Deep BSDE solver with a ResNet architecture, the computational complexity per forward pass is $O(d \cdot h_f + h_f^2 \cdot L)$. Under this formulation, the complexity scales linearly with respect to the asset dimension $d$, provided the hidden width remains constant. However, in literature where the network width is scaled proportionally to the input dimension (i.e., $h_f \sim d$), the complexity becomes quadratic, $O(d^2 L)$. }.\par 
Beyond raw execution time, the \textit{Stable Entry time} ($t^*$) provides a rigorous metric for the time-to-solution. This represents the point after which the approximation remains permanently within the 1\% relative error band. The data reveals that $t^*$ scales very well, doubling from 8.02 to 16.32 minutes while the dimensionality increases twenty-fold. The consistently high efficiency ratios indicate that the solver stabilizes the solution well before the completion of the 30,000-epoch budget. For instance, in the $d=25$ case, stable convergence is attained in only 26.77\% of the total runtime.\par

\begin{table}[H]
\centering
\small
\begin{tabular}{l c c c}
\toprule
\textbf{Dimension ($d$)} & \textbf{Total Time (min)} & \textbf{Stable Entry $t^*$ (min)} & \textbf{Efficiency Ratio (\%)} \\ 
\midrule
10  & 21.29 & 8.02  & 37.67 \\
20  & 25.05 & 9.19  & 36.69 \\
25  & 21.63 & 5.79  & 26.77 \\
50  & 23.47 & 9.94  & 42.35 \\
100 & 26.45 & 12.99 & 49.11 \\ 
200 & 29.58 & 16.32 & 55.17 \\
\bottomrule
\end{tabular}
\vspace{0.3cm}
\caption{Computational efficiency. \small\textit{The efficiency ratio ($t^* / \text{Total Time}$) quantifies the training overhead. The sub-linear growth of $t^*$ relative to $d$ highlights the efficiency of the vectorized solver on parallel hardware. Numerical experiments were performed on an NVIDIA G4 GPU with the following parameters. Problem parameters:  $r = \mu = 0.05$, $\sigma = \sqrt{2}$, $K = 2$, $T = 1$. Penalty and discretization: $N = 99$ time steps, penalty $\lambda = \frac{1}{\sqrt{T/N}}=9.95$. Training:  $30,000$ epochs, batch size $8,192$ (single-batch training per epoch).}}
\label{tab:convergence_comparison}
\end{table}
Figure \ref{fig:global_convergence_grid} shows the convergence of the solution to the variational inequality. Despite the high-dimensional state space, the solver achieves stable convergence in all cases, with total training time scaling sub-linearly relative to the dimension.
\begin{figure}[htbp]
    \centering
    \includegraphics[width=\textwidth]{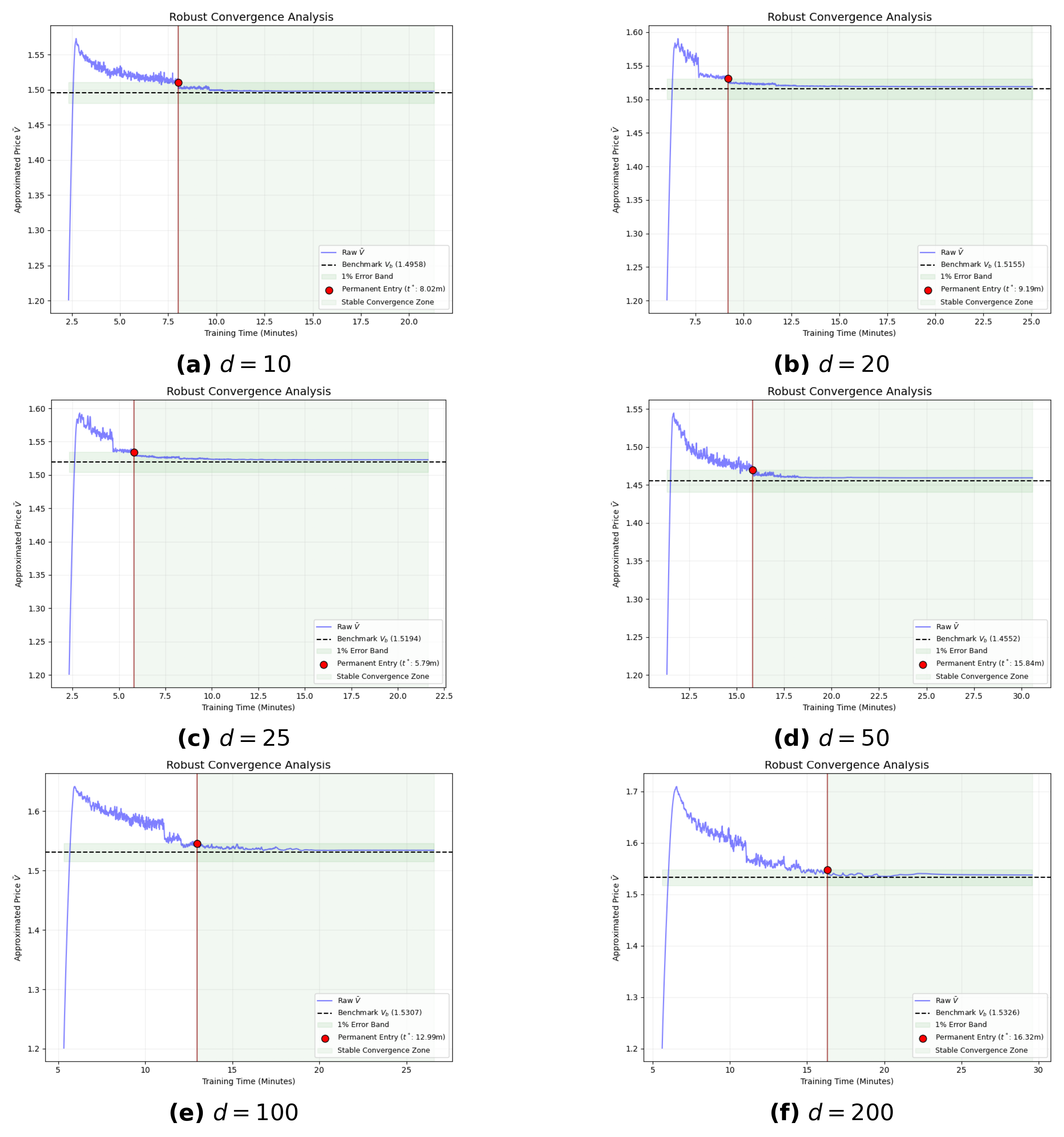}
    \caption{Robust Convergence Analysis of the DPM Solver across dimensions $d \in \{10, 20, 25, 50, 100, 200\}$. \small\textit{The red marker indicates the point of permanent entry ($t^*$) into the 1\% error tolerance band (shaded green). Numerical experiments were performed on an NVIDIA G4 GPU with the following parameters. Problem parameters:  $r = \mu = 0.05$, $\sigma = \sqrt{2}$, $K = 2$, $T = 1$. Penalty and discretization: $N = 99$ time steps, penalty $\lambda = \frac{1}{\sqrt{T/N}}=9.95$. Training:  $30,000$ epochs, batch size $8,192$ (single-batch training per epoch).}}
    \label{fig:global_convergence_grid}
\end{figure}
\subsection{Comparative analysis of loss function robustness: MSE v.s. \texorpdfstring{$L^1$}{L1}}
The theoretical error analysis suggests that the $L^1$ loss function is a natural candidate for the DPM framework. Given that the majority of Deep BSDE solvers traditionally utilize MSE, we conduct a comparative study to evaluate the efficacy of the $L^1$ loss function employed in this work. Figure \ref{fig:combined_metrics} illustrates three critical performance metrics: total training time, the convergence stability threshold ($t^*$) required to reach a stable $1\%$ error, and the final relative pricing error. While numerical results exhibit minor variations across different dimensions, both loss functions consistently perform within the same order of magnitude. These marginal fluctuations likely originate from stochastic hardware latencies—such as CPU-GPU communication overhead—rather than inherent algorithmic disparities. Consequently, these results suggest that the DPM framework is numerically indifferent to the choice between MSE and $L^1$ loss functions, thereby validating the robustness of the choice of the loss functions within our approach.
\begin{figure}[htbp]
    \centering
    \includegraphics[width=\textwidth]{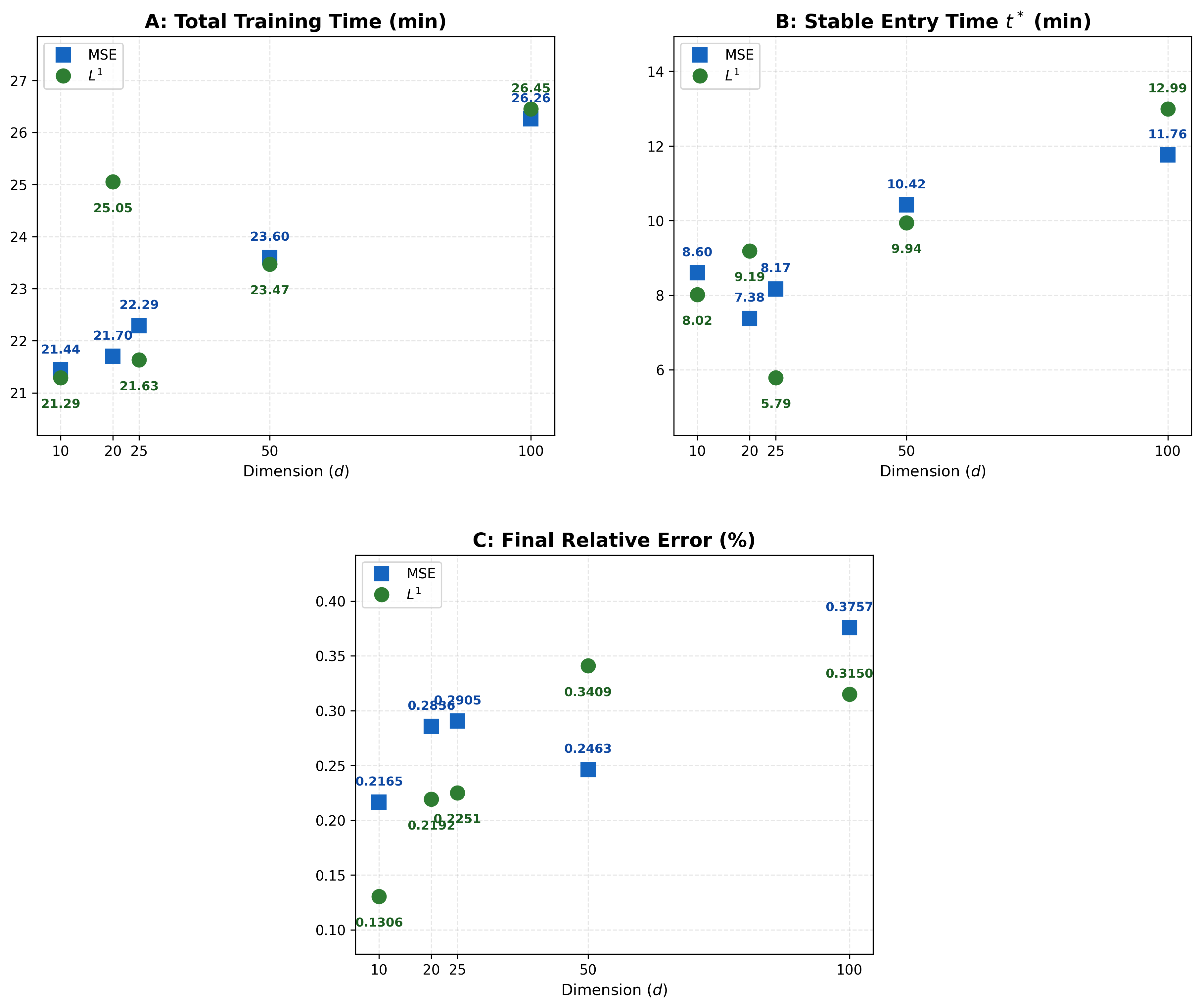}
    \caption{Numerical Performance Comparison between MSE and $L^1$ Loss functions: \small\textit{(a) Total training time for 30,000 epochs; (b) Time required to reach the stable 1\% error band ($t^*$); (c) Relative error of the approximated solution $V$ at the end of training, calculated by $\frac{|\bar{V}-V_{b}|}{V_b}$. Numerical experiments were performed on an NVIDIA G4 GPU with the following parameters. Problem parameters:  $r = \mu = 0.05$, $\sigma = \sqrt{2}$, $K = 2$, $T = 1$. Penalty and discretization: $N = 99$ time steps, penalty $\lambda = \frac{1}{\sqrt{T/N}}=9.95$. Training:  $30,000$ epochs, batch size $8,192$ (single-batch training per epoch).}}
    \label{fig:combined_metrics}
\end{figure}

\section{Conclusions}\label{sec:conclusion}
We have introduced the DPM for solving high dimensional optimal stopping problems in a continuous time setting. The DPM, inspired by the penalty method, utilizes the Deep BSDE technique to approximate solutions to variational inequalities. We have demonstrated that the error of the DPM can be bounded by the cost function and the square root of the time step size. The error analysis underscores the importance of selecting the penalization parameter. Moreover, numerical tests have shown that the DPM works efficiently and accurately in high dimensional models. We believe that the DPM can be extended to apply to optimal switching models, given the validity of the penalty method in solving variational inequality systems.  However, further investigation is required to assess the computational efficiency of the algorithm.

\bibliographystyle{ecta}  

\bibliography{EbertReferences}        
\end{document}